\newcommand{\R}{\mathbb{R}}
\newcommand{\C}{\mathbb{C}}
\newcommand{\<}{\langle}
\renewcommand{\>}{\rangle}
\newcommand{\supp}{\mbox{supp}}
\newcommand{\eps}{\varepsilon}
\newcommand{\conj}{\overline}
\newcommand{\beq}{\begin{equation}}
\newcommand{\eeq}{\end{equation}}
\renewcommand{\tilde}{\widetilde}
\newcommand{\bit}{\begin{itemize}}
\newcommand{\eit}{\end{itemize}}
\newcommand{\ben}{\begin{enumerate}}
\newcommand{\een}{\end{enumerate}}
\newcommand{\bp}{\begin{pmatrix}}
\newcommand{\ep}{\end{pmatrix}}
\newtheorem{definition}{Definition}
\newtheorem{theorem}{Theorem}
\newtheorem{lemma}{Lemma}
\newtheorem{corollary}[theorem]{Corollary}
\newtheorem{proposition}[theorem]{Proposition}
\newtheorem{remark}[subsection]{Remark}
\title{The recoverability limit for superresolution via sparsity}
\author{Laurent Demanet and Nam Nguyen}
\date{December 2014}		
\begin{document}
\maketitle

\begin{abstract}
We consider the problem of robustly recovering a $k$-sparse coefficient vector from the Fourier series that it generates, restricted to the interval $[- \Omega, \Omega]$. The difficulty of this problem is linked to the superresolution factor SRF, equal to the ratio of the Rayleigh length (inverse of $\Omega$) by the spacing of the grid supporting the sparse vector. In the presence of additive deterministic noise of norm $\sigma$, we show upper and lower bounds on the minimax error rate that both scale like $(SRF)^{2k-1} \sigma$, providing a partial answer to a question posed by Donoho in 1992. The scaling arises from comparing the noise level to a restricted isometry constant at sparsity $2k$, or equivalently from comparing $2k$ to the so-called $\sigma$-spark of the Fourier system. The proof involves new bounds on the singular values of restricted Fourier matrices, obtained in part from old techniques in complex analysis.
\end{abstract}

{\bf Acknowledgments.} This work was funded by the Air Force Office of Scientific Research and the Office of Naval Research. LD also acknowledges funding from the National Science Foundation and Total S.A.

\section{Introduction}

In this paper we consider approximations in the partial Fourier system
\[
a_j(\omega) = \frac{e^{i j \tau \omega}}{\sqrt{2 \Omega}}, \qquad \omega \in [-\Omega, \Omega],
\]
where $\tau$ is the grid spacing and $\Omega$ is the band limit. We recover Fourier series when $\Omega = \frac{\pi}{\tau}$, but for smaller values of $\Omega$ the collection $a_j(\omega)$ is non-orthogonal and redundant.

We are interested in the problem of recovering the coefficients $x_{0,j}$ that enter $k$-sparse expansions of the form
\beq\label{eq:sparse-f}
f(\omega) = \sum_{j \in T} x_{0,j} a_j(\omega) + e(\omega), \qquad |T| = k,
\eeq
from the sole knowledge of $f(\omega)$ with $\omega \in [-\Omega, \Omega]$, and where $e(\omega)$ is a perturbation of size $\| e \|_2 \leq \sigma$. The notation $|T|$ refers to the cardinality of $T$. The difficulty of this problem is governed by the superresolution factor 
\[
\mbox{SRF} \triangleq \frac{\pi}{\tau \Omega},
\]
which measures the number of grid points covered by the Rayleigh length $\frac{\pi}{\Omega}$. This paper is concerned with the precise balance between SRF, the sparsity $k$, and the noise level $\sigma$, for which recovery of the index set $T$ and the coefficients $x_{0,j}$ is possible. 


\newpage

It is well-known that the sparse recovery problem (\ref{eq:sparse-f}) is one of the simplest mathematical models that embodies the difficulty of superresolution in diffraction-limited imaging, direction finding, and bandlimited signal processing. An important alternative would be to let $t_j$ receive any positive value in place of $j \tau$, but we do not deal with the ``off-grid" case in this paper.

Without loss of generality, and for the remainder of the paper, we consider the renormalized problem
\[
a_{j}(\theta) = \frac{e^{ i j \theta}}{\sqrt{2 \pi y}}, \qquad  \theta \in [- \pi y, \pi y],
\]
where $\theta = \tau \omega$ and $y = \frac{\tau \Omega}{\pi} = \frac{1}{\mbox{SRF}}$. We now recover Fourier series when $y = 1$. In the sequel we assume $y < 1/2$. 

\subsection{Minimax recovery theory}

Write $f = A x_0 + e$ as a shorthand for an expansion in the dictionary $A_{\theta, j} = a_j(\theta)$ with coefficients $x_{0,j}$, plus some noise $e$. The theory that we now present applies to general matrices\footnote{Albeit with a continuous row index. Because the column index is finite, this feature is inconsequential and does not warrant the usual complications of functional analysis.} $A$, not necessarily to partial Fourier matrices. For an index set $T$, denote by $A_T$ the restriction of $A$ to columns in $T$. Assume that the columns are unit-normed. 

The best achievable error bound on any approximation of $x_0$ from the knowledge of $f$ is linked to the concept of lower restricted isometry constant. This notion is well-known from compressed sensing, but is used here in the very different regime of arbitrarily ill-conditioned submatrices $A_T$.

\begin{definition}\label{def:RIC}(Lower restricted isometry constant)
Let $k > 0$ be an integer. Then 
\[
\varepsilon_k = \min_{T: |T|=k} \sigma_{\min}(A_T).
\]
\end{definition}

Note that $\varepsilon_k = \sqrt{1 - \delta_k}$ in the notation of \cite{candes2006robust, candes2006stable}.

Denote by $\tilde{x}$ any estimator of $x_0$ based on the knowledge of $f = A x_0 + e$. The minimax error of any such estimator, in the situation when $\| x_0 \|_0 \equiv |\supp \; x_0| = k$ and $\| e \| \leq \sigma$, is
\[
E(k, \sigma) = \inf_{\tilde{x}} \;\; \sup_{x_0 : \| x_0 \|_0 = k} \;\; \sup_{e : \| e \| = \sigma} \| \tilde{x} - x_0 \|.
\]
The minimax error is tightly linked to the value of the lower restricted isometry constant \emph{at sparsity level $2k$}. We prove the following result in Section \ref{sec:minimax}.

\begin{theorem}\label{teo:minimax}
Let $k > 0$ be an integer, and let $\sigma > 0$. We have the bounds
\[
\frac{1}{2} \frac{1}{\varepsilon_{2k}} \sigma \leq E(k, \sigma) \leq 2 \frac{1}{\varepsilon_{2k}} \sigma.
\]
\end{theorem}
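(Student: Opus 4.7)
\medskip

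\noindent\textbf{Proof plan.}

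For the lower bound the plan is a standard two-point / indistinguishability argument tailored to the quantity $\varepsilon_{2k}$. By Definition~\ref{def:RIC}, there is a set $S$ with $|S|=2k$ and a unit vector $v$ supported on $S$ such that $\|Av\| = \varepsilon_{2k}$. I would split $S$ into two disjoint sets $T_1,T_2$ of size $k$ (up to an arbitrarily small perturbation that makes both halves of $v$ have no zero entries, so that each piece is exactly $k$-sparse) and write $v = u_1 - u_2$ with $\supp u_i \subset T_i$. Rescaling $v$ to have norm $2\sigma/\varepsilon_{2k}$, one has $\|Au_1 - Au_2\| = 2\sigma$, so there exist noise vectors $e_1,e_2$ with $\|e_i\|=\sigma$ and $Au_1+e_1 = Au_2+e_2 =: f$. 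Any estimator $\tilde{x}(f)$ then satisfies $\|\tilde{x}-u_1\| + \|\tilde{x}-u_2\| \geq \|u_1-u_2\| = 2\sigma/\varepsilon_{2k}$ by the triangle inequality, so one of the two errors is at least $\sigma/\varepsilon_{2k}$. Taking the infimum over estimators and supremum over $(x_0,e)$ yields $E(k,\sigma)\geq \tfrac12 \sigma/\varepsilon_{2k}$.

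For the upper bound I would analyze the constrained least-squares (equivalently, $\ell_0$) estimator
\[
\tilde{x} \in \argmin_{x:\,\|x\|_0\leq k}\|Ax-f\|.
\]
Feasibility of $x_0$ gives $\|A\tilde{x}-f\|\leq \|Ax_0-f\|=\|e\|\leq \sigma$, and the triangle inequality then yields $\|A(\tilde{x}-x_0)\|\leq 2\sigma$. The difference $\tilde{x}-x_0$ is supported on a set $T'$ of size at most $2k$. The key small lemma I would insert is the monotonicity
\[
|T'|\leq |T| \;\Longrightarrow\; \sigma_{\min}(A_{T'}) \geq \sigma_{\min}(A_T),
\]
which follows by zero-padding: any unit vector $y$ on $T'$ extends to a unit vector on any $T\supset T'$ with $\|A_T y_{\mathrm{ext}}\|=\|A_{T'}y\|$. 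Consequently $\sigma_{\min}(A_{T'})\geq \varepsilon_{2k}$ for every $T'$ with $|T'|\leq 2k$, so
\[
\varepsilon_{2k}\|\tilde{x}-x_0\| \leq \|A(\tilde{x}-x_0)\| \leq 2\sigma,
\]
giving $E(k,\sigma)\leq 2\sigma/\varepsilon_{2k}$.

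The only nontrivial point is the sparsity-splitting step in the lower bound: one must ensure that the two halves $u_1,u_2$ are exactly $k$-sparse to match the definition of $E(k,\sigma)$, which uses $\|x_0\|_0=k$. I would handle this by either (i) an arbitrarily small perturbation of $v$ inside the $2k$-dimensional subspace that keeps $\|Av\|$ within a factor $1+\eta$ of $\varepsilon_{2k}$ and has no zero entries on $S$, then letting $\eta\to 0$, or (ii) noting that $E(k,\sigma)$ is monotone in $k$ so the same bound holds when the hypothesis is relaxed to $\|x_0\|_0\leq k$. Everything else is routine. The main conceptual observation is simply that $2k$ (and not $k$) is the right sparsity at which to read the restricted isometry constant, because errors between $k$-sparse vectors are $2k$-sparse; this explains the appearance of $\varepsilon_{2k}$ on both sides of the minimax sandwich.
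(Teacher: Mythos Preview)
Your proof is correct and follows essentially the same two-part strategy as the paper: a two-point indistinguishability argument built from the $2k$-sparse minimizer of $\|Av\|/\|v\|$ for the lower bound, and an $\ell_0$-type estimator for the upper bound. The only differences are cosmetic---you use a symmetric noise split (both $\|e_i\|=\sigma$) where the paper uses $e_0=0$ and $\|e_1\|\leq\sigma$, and you formulate the estimator as $\min_{\|x\|_0\leq k}\|Ax-f\|$ rather than the paper's $(P_0)$; your care about exact $k$-sparsity via perturbation is a detail the paper leaves implicit, and in fact your symmetric split already delivers the sharper constant $1$ (not $1/2$) in the lower bound.
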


An estimator $\tilde{x}$ is said to be minimax if its error $\sup_{x_0 : \| x_0 \|_0 = k} \;\; \sup_{e : \| e \| = \sigma} \| \tilde{x} - x_0 \|$ obeys the same scaling as $E$, up to a multiplicative constant.

The relevance of $\eps_{2k}$ is clear: it is the error magnification factor of any minimax estimator of $x_0$. Estimation of a general $k$-sparse coefficient sequence is possible if and only if $\sigma$ is small in comparison to $\eps_{2k}$.

\subsection{The lower restricted isometry constant}

The analysis that we present in this paper reveals that $\eps_n$ is controlled by the superresolution factor via the quantity $c(y) = \sin(\frac{\pi y}{2}) = \sin(\frac{\pi}{2 \; \mbox{\scriptsize SRF}})$.

\begin{theorem}\label{teo:eps}
There exist $C > 0$ and $y^* > 0$ such that, for all $0<y < y^*$, and with $c(y) = \sin(\pi y / 2)$,
\[
C \left( \frac{c(y)}{4} \right)^{n} \leq \eps_{n+1} \leq 4 \, c(y)^{n}.
\]
\end{theorem}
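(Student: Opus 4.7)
The proof naturally splits into the upper and lower bounds, which call for quite different techniques.

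For the upper bound I would construct an explicit bad index set and near-null coefficient vector. Take $T = \{0, 1, \ldots, n\}$, so that the action of $A_T$ on $x$ amounts to evaluating the polynomial $p(z) = \sum_{j=0}^n x_j z^j$ on the circular arc $\{e^{i\theta} : \theta \in [-\pi y, \pi y]\}$. The Joukowski substitution $w = (z+z^{-1})/2 = \cos\theta$ sends this arc to the real interval $[\cos(\pi y), 1]$, whose length is $2 \sin^2(\pi y/2) = 2 c(y)^2$. Classical Chebyshev extremality identifies the monic polynomial in $w$ of degree $m$ of minimum sup norm on this interval, with value $2 (c(y)^2/2)^m$. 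Lifting via $p(z) = 2^m z^m \widetilde T_m((z+z^{-1})/2)$ produces a polynomial in $z$ of degree $n=2m$ with leading coefficient $1$ and $\|p\|_{L^\infty(\mathrm{arc})} \leq 2 c(y)^n$; monicity forces $\|x\|_2 \geq 1$, so $\sigma_{\min}(A_T) \leq 2 c(y)^n$. For odd $n$, inserting a factor $(z-1)$ contributes at most $2 c(y)$ on the arc, yielding $\eps_{n+1} \leq 4 c(y)^n$ uniformly.

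For the lower bound, fix any $T$ with $|T| = n+1$ and any $x$, and set $p(\theta) = \sum_{j \in T} x_j e^{ij\theta}$. Parseval's identity and the normalization of $A$ recast the desired inequality as the sparse Remez--Tur\'an estimate
\[
\|p\|_{L^2(\mathbb{T})} \leq C^{-1} \bigl(4/c(y)\bigr)^n \, y^{-1/2} \, \|p\|_{L^2([-\pi y, \pi y])},
\]
valid for every $(n{+}1)$-sparse trigonometric polynomial with integer frequencies. The plan is to sample $p$ at $n{+}1$ well-chosen nodes $\theta_l \in [-\pi y, \pi y]$, invert the generalized Vandermonde $V_{lk} = e^{i j_k \theta_l}$ to recover $x = V^{-1} (p(\theta_l))_l$, and close the argument with a lower bound $\sigma_{\min}(V) \gtrsim (c(y)/4)^n$ together with an elementary sample-to-$L^2$ inequality on the arc. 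The base $c(y)/4$ mirrors the Chebyshev extremal constant $(b-a)/4 = c(y)^2/2$ on the Joukowski image, so the two halves of the theorem match up to a universal factor raised to the $n$-th power.

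The main obstacle is precisely $\sigma_{\min}(V) \gtrsim (c(y)/4)^n$ for arbitrary $T$. When the exponents $j_k$ are non-consecutive, the generalized Vandermonde determinant factors as a Schur polynomial in $z_l = e^{i\theta_l}$ times the classical Vandermonde $\prod_{l<l'}(z_l - z_{l'})$, and the Schur factor on a short circular arc is not easily estimated. This is where the ``old techniques in complex analysis'' referenced in the abstract must do the work: candidates include Hadamard's three-lines lemma, Carleman's formula, or a direct dual-polynomial construction recycling the Chebyshev-arc polynomials of the upper bound. A fallback, should this fail to produce the sharp base, would be to apply Tur\'an's second theorem in its integer-frequency form to push $\|p\|_{L^\infty(\mathrm{arc})}$ up to $\|p\|_{L^\infty(\mathbb{T})}$ and then pay a factor of $\sqrt{y}$ in converting $L^\infty$ to $L^2$; the burden would then shift onto sharpening Tur\'an's constant down to the optimal value so that $4/c(y)$, rather than a suboptimal base, emerges.
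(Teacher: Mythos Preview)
Your upper bound is correct and in fact more elementary than the paper's. The paper obtains $\eps_{n+1} \leq 4 c(y)^n$ via Szeg\H{o} theory: it bounds $\sigma_{\min}(A_T)$ for $T=\{0,\ldots,n\}$ by the subspace angle $\sin\angle(z^n, P_{n-1}) = 1/k_n$, and then controls $k_n$ through the Christoffel variational principle combined with Faber polynomials and Ellacott's total-rotation inequality. Your Joukowski/Chebyshev construction gets to the same constant with less machinery, at the cost of splitting into even and odd $n$.

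Your lower bound, however, has a genuine gap, and the paper's route is structurally different from anything you outline. You attempt to treat arbitrary $T$ in one shot via a sparse Remez--Tur\'an inequality, and you correctly identify the obstacle: bounding $\sigma_{\min}(V)$ for a generalized Vandermonde with non-consecutive integer exponents on a short arc. You do not resolve this; you list candidate tools and a Tur\'an fallback whose constant you admit would need sharpening. The paper does \emph{not} attack arbitrary $T$ directly. Instead it proves two separate lemmas. First, a monotonicity lemma: for all sufficiently small $y$ (this is the origin of the $y^*$ in the statement), $\sigma_{\min}(A_T)$ is an increasing function of the pairwise spacings $\tau_{j_2}-\tau_{j_1}$, so the minimum over $T$ is attained at consecutive integers. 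The argument Taylor-expands the Gram matrix at $y=0$, reduces to a generalized eigenvalue problem for the Hilbert matrix against a rank-one matrix $mm^*$ built from the last row of the inverse Vandermonde, and checks that $|m_j| = \prod_{i\ne j}|\tau_i-\tau_j|^{-1}$ decreases strictly under spreading. Second, for the consecutive case only, the lower bound $\sigma_{\min}(A_T)\geq C(c(y)/4)^n$ is obtained via Aitken's variational characterization $1/\lambda_{\min}(G) = \max_P \|P\|_{L^2(\mathbb{T})}^2$ over polynomials with $\|P\|_\Gamma = 1$, and then bounding $|P(z)|$ off $\Gamma$ using the reproducing property of the Szeg\H{o} kernel for the exterior of the arc, whose explicit form the paper derives from the conformal map $\Phi$. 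None of Hadamard three-lines, Carleman, or a direct dual construction appears; the ``old techniques in complex analysis'' are Szeg\H{o}'s orthogonal-polynomial machinery and Widom's extremal characterization of the Szeg\H{o} kernel.

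So: keep your upper-bound argument, but abandon the attempt to handle general $T$ in one step. The missing idea is the reduction to consecutive indices, and the reason the theorem carries the restriction $y<y^*$ is precisely that this reduction is only proved perturbatively near $y=0$.
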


We conjecture that the restriction to small $y$ is not needed for the statement to hold.  The proof is based on two distinct results that we present in Section \ref{sec:contiguous}: 
\bit
\item Lemma \ref{teo:eps2}, which establishes that, when $y$ is small, the worst-case scenario for the least singular value is when $j = 0, 1, \ldots, k-1$ (or any $k$ consecutive integers); and 
\item Lemma \ref{teo:eps1}, which provides upper and lower bounds for the least singular value in this scenario.
\eit

This paper's main result is obtained by combining theorem \ref{teo:minimax} with theorem \ref{teo:eps} when $n+1 = 2k$.

\begin{corollary}\label{teo:main}
\[
C_{1,k} (SRF)^{2k-1} \sigma \leq E(k,\sigma) \leq C_{2,k} (SRF)^{2k-1} \sigma.
\]
\end{corollary}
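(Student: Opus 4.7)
The plan is to chain Theorem \ref{teo:minimax} with Theorem \ref{teo:eps}, specialized to $n+1 = 2k$, and then convert the resulting bound in $c(y)$ into a bound in SRF using the elementary asymptotics of $\sin$ near zero. Since both inputs are stated, no new analysis is needed; the whole proof is bookkeeping of constants.

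First, I would invoke Theorem \ref{teo:minimax} to get
\[
\frac{1}{2\eps_{2k}}\,\sigma \;\leq\; E(k,\sigma) \;\leq\; \frac{2}{\eps_{2k}}\,\sigma,
\]
so the task reduces to controlling $1/\eps_{2k}$. Next I would apply Theorem \ref{teo:eps} with $n = 2k-1$: assuming $y < y^\ast$ (equivalently, SRF larger than some absolute threshold), one gets
\[
\frac{1}{4\, c(y)^{2k-1}} \;\leq\; \frac{1}{\eps_{2k}} \;\leq\; \frac{4^{2k-1}}{C\, c(y)^{2k-1}}.
\]
Combining these two sandwiches immediately places $E(k,\sigma)$ between a constant multiple of $\sigma / c(y)^{2k-1}$ and another constant multiple of the same expression, with both constants depending only on $k$.

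The last step is to replace $c(y) = \sin(\pi y/2)$ by an expression in SRF $= 1/y$. I would use the two elementary inequalities
\[
\frac{2x}{\pi} \;\leq\; \sin(x) \;\leq\; x, \qquad x \in [0,\pi/2],
\]
applied at $x = \pi y/2$. This gives $y \leq c(y) \leq \pi y /2$, hence
\[
\mathrm{SRF} \;\leq\; \frac{1}{c(y)} \;\leq\; \frac{\pi}{2}\,\mathrm{SRF},
\]
so $c(y)^{-(2k-1)}$ is equivalent to $(\mathrm{SRF})^{2k-1}$ up to a factor $(\pi/2)^{2k-1}$. Substituting into the previous display yields the claimed two-sided bound with
\[
C_{1,k} = \tfrac{1}{8}, \qquad C_{2,k} = \tfrac{2}{C}\,(2\pi)^{2k-1},
\]
or similar explicit constants, valid whenever $\mathrm{SRF} > 1/y^\ast$.

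There is no real obstacle here: the substantive work has already been done in Theorems \ref{teo:minimax} and \ref{teo:eps}. The only thing to be careful about is that the hypothesis $y < y^\ast$ of Theorem \ref{teo:eps} translates into a lower-threshold requirement on SRF, which should be recorded (or absorbed into the constants $C_{1,k}, C_{2,k}$ if a uniform statement over all SRF $\geq 2$ is desired, using that $\eps_{2k}$ is bounded and bounded away from $0$ for SRF in any compact range).
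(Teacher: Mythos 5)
Your proposal is correct and follows exactly the paper's route: chain Theorem \ref{teo:minimax} with Theorem \ref{teo:eps} at $n+1=2k$, use $c(y)\asymp(\mathrm{SRF})^{-1}$, and absorb the regime $y\geq y^\ast$ into the constants. One tiny bookkeeping slip: from $y\leq \sin(\pi y/2)\leq \pi y/2$ one gets $\tfrac{2}{\pi}\,\mathrm{SRF}\leq 1/c(y)\leq \mathrm{SRF}$ (not the reversed chain you wrote), which only shifts your explicit constants and does not affect the conclusion.
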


The proof is clear from the fact that $c(y) \asymp (SRF)^{-1}$, and from absorbing the unknown behavior of $\eps_{2k}$ for small SRF in the pre-constants. For the same reasons as above, we conjecture that the constants $C_{1,k}$ and $C_{2,k}$ do not depend on $k$. 

Note that Corollary \ref{teo:main} is the worst-case bound. There may exist large subsets of vectors $x_0$ that exhibit further structure than $k$-sparsity, and for which the recovery rate is substantially better than $(SRF)^{2k-1} \sigma$.

\subsection{Related work}

Corollary \ref{teo:main} addresses a special case of a question originally raised by Donoho in 1992 in \cite{donoho1992SR}. In that paper, Donoho recognizes that the ``sparse clumps" signal model is the right notion to achieve superresolution. Given a vector $x$, he lets $r$ for the smallest integer such that the number of nonzero elements of $x$ is at most $r$ within any consecutive subset of cardinality $r$ times the Rayleigh length. Clearly, the set of vectors that satisfies Donoho's model at level $r$ includes the $r$-sparse vectors. If $E(r,\sigma)$ denotes the minimax error of estimating a vector at level $r$, under deterministic noise of level $\sigma$ in $L^2$, then Donoho showed that
\[
C_{1,r} (SRF)^{2r-1} \sigma \leq E(r,\sigma) \leq C_{2,r} (SRF)^{2r+1} \sigma.
\]
Corollary \ref{teo:main} is the statement that there is no gap in this sequence of inequalities --- and that Donoho's lower bound gives the correct scaling --- albeit when $r$ is understood as sparsity rather than the more general (and more relevant) ``sparse clumps" model. It would be very interesting to close the exponent gap in the latter case as well.

Around the same time, Donoho et al. \cite{donoho1992maximum} established that perfect recovery of $k$-sparse \emph{positive} vectors was possible from $2k$ low-frequency \emph{noiseless} measurements, and that the mere positivity requirement is a sufficient condition to obtain unique recovery. It is worth comparing this result to very classical work on the trigonometric moment problem \cite{grenander1958toeplitz}, where $k$ complex measurements suffice to determine $k$ real-valued phases and $k$ real-valued positive ampitudes in a model of the form (\ref{eq:sparse-f}), sampled uniformly in $\omega$. The observation that $m=2k$ is the minimum number of noiseless measurements necessary for recovery of a $k$-sparse vector is also clear from the more recent literature on sparse approximation.


The significance of $2k$ as a threshold for recovery of $k$-sparse vectors also plays a prominent role in Donoho and Elad's later work \cite{donoho2003spark}. They define the spark $s$ of a matrix $A$ to be the smallest number of linearly dependent columns, and go on to show that the representation of the form $Ax$ is unique for any $s/2$-sparse vector $x$. We explain in section \ref{sec:eps-spark} why our results can be seen as a noise-robust version of this observation: the functional inverse of the lower restricted isometry constant $\eps_k$, i.e., $k$ as a function of $\eps$, qualifies as the \emph{$\eps$-spark} $s_{\eps}$ of $A$, and equals twice the sparsity level of vectors $x$ that are robustly recoverable from $A x$.

It should be emphasized that our analysis concerns the situation when data are available for all $\omega \in [-\Omega, \Omega]$, i.e., in the continuum. The same results hold for finely sampled $\omega$, though it is not the purpose of this paper to discuss precisely what sampling condition will lead to the same scaling of the minimax error. 
For superresolution, it appears that the bandwidth parameter plays a more central role in the recovery scaling than the number of measurements.

A resurgence of interest in the superresolution problem was spurred by the work of Cand\`{e}s and Fernandez-Granda, who showed that $\ell_1$ minimization\footnote{Or its continuous counterpart, the total variation of a measure, in the gridless case.} is able to superresolve spikes that are isolated, in the sense that their distance is at least a constant times the Rayleigh length \cite{candes2012towards, candes2012noisy, fernandez2013support}. In this paper's language, their stability estimate reads $E \lesssim (SRF)^{2r} \sigma$ with $r=1$. Related important work is in \cite{azais2013SR, castro2012superresolution, duval2013SR, tang2013spectral}. The same spike separation condition is also sufficient for other types of algorithms to perform superresolution, such Fannjiang and Liao's work on MUSIC \cite{fannjiang2011MUSIC, liao2014Super}, and Moitra's work on the matrix pencil method \cite{moitra2014Super}, where the separation constant is completely sharp. 

As we put the final touches to this paper, we also learned of the work of Morgenshtern and Cand\`{e}s \cite{Morgenshtern}, which shows that the estimate $E \lesssim (SRF)^{2r} \sigma$ continues to hold in the setting of Donoho's definition of $r$, for $\ell_1$ minimization on a grid, without the spike separation condition, and as long as $x_0$ is entrywise nonnegative. It is well-known that $\ell_1$ minimization does not generally superresolve when $x_0$ has opposite signs and $A$ selects low frequencies.

As mentioned earlier, Theorem \ref{teo:eps} is based on upper and lower bounds on the smallest singular value of $A_{\theta,j}$ when $j$ spans a sequence of $k$ consecutive integers (see lemma \ref{teo:eps1} in section \ref{sec:contiguous}.) The spectral problem for this matrix was already thoroughly studied in the theory of discrete prolate sequences by Slepian in \cite{slepian1978prolateV}, who found the asymptotic rate of decay for the eigenvalues of $A^* A$, both in the limit $N \to \infty$ and SRF $\to 0$. Lemma \ref{teo:eps1} however concerns the non-asymptotic case, and could not have been proved with the same techniques\footnote{The techniques in \cite{slepian1978prolateV} could have led to a weaker form of Theorem \ref{teo:eps}, which could have sufficed to arrive at Corollary \ref{teo:main}, but would have taken us farther from the conjecture that Corollary \ref{teo:main} holds with $C_1$ and $C_2$ independent of $k$.} as in \cite{slepian1978prolateV}. Note in passing that the usual operator of time-limiting and band-limiting, giving rise to non-discrete prolate spheroidal wave functions \cite{slepian1961prolateI, landau1961prolateII, LandauWidom}, is of a very different nature from $A$. Its column index is continuous, and its singular values decay factorially rather than exponentially.

From a practical point of view, it is clear that Corollary \ref{teo:main} is mostly a negative result. For any SRF greater than 1, the conditioning of the problem grows exponentially in the sparsity level $k$.


\section{Minimax recovery and the $\eps$-spark}\label{sec:minimax}


\subsection{Robust $\ell_0$ recovery}

Consider data $f = Ax_0 + e$ with $\| e \| \leq \sigma$, and the $\ell_0$ recovery problem
\[
(P_0) \qquad \min_{x} \| x \|_0, \qquad \| f -  Ax \| \leq \sigma.
\]
Any minimizer of $(P_0)$ generates an estimator of $x_0$ that we will use to prove the upper bound in Theorem \ref{teo:minimax}. We now show the role of the lower restricted isometry constant $\eps_{2k}$ at level $2k$ for $\ell_0$ recovery of a $k$-sparse $x_0$. 


\begin{theorem}\label{teo:ell0}
Let $k > 0$ be an integer.
\bit
\item[(i)] Let $\sigma > 0$. Let $x_0 \in \R^n$ with $\| x_0 \|_0 = k$, and let $f = Ax_0 + e$ for some $\| e \| \leq \sigma$. Then any minimizer $x$ of $(P_0)$ obeys $\| x - x_0 \| \leq \frac{2}{\varepsilon_{2k}} \sigma$

\item[(ii)] There exists $x_0 \in \R^n$ with $\| x_0 \|_0 = k$ such that $f = Ax_0$ is explained by a sparser vector rather than $x_0$ with tolerance $\varepsilon_{2k}$, i.e., there exists $x_1$ for which $\| x_1 \|_0 \leq k$, $\| x_1 - x_0 \| = 1$, and $\| f - A x_1 \| \leq \varepsilon_{2k}$.
\eit
\end{theorem}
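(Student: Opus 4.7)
The two parts are independent. Part (i) is a standard ``$k+k=2k$'' argument, while part (ii) reads the ambiguity directly off the attained minimum in Definition \ref{def:RIC}.

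For (i), since $x_0$ itself is feasible for $(P_0)$ (its residual $\| f - A x_0 \| = \| e \|$ is at most $\sigma$), any minimizer $x$ of $(P_0)$ satisfies $\| x \|_0 \le \| x_0 \|_0 = k$. The difference $v := x - x_0$ is therefore supported on a set of size at most $2k$. Applying the triangle inequality to $A v = (A x - f) + (f - A x_0)$ gives $\| A v \| \le 2\sigma$. On the other hand, $\| A v \| \ge \eps_{2k} \| v \|$ for any $v$ supported on at most $2k$ indices: this is Definition \ref{def:RIC}, combined with the monotonicity $\sigma_{\min}(A_{T'}) \ge \sigma_{\min}(A_T)$ whenever $T' \subseteq T$, which extends the isometry bound from support of size exactly $2k$ to support of size at most $2k$. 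Combining yields $\| v \| \le 2 \sigma / \eps_{2k}$.

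For (ii), I would pick an index set $T^*$ of cardinality $2k$ attaining $\sigma_{\min}(A_{T^*}) = \eps_{2k}$, and let $u$ be the associated unit right-singular vector, supported on $T^*$ and satisfying $\|A u\| = \eps_{2k}$. Then partition $T^* = T_0 \cup T_1$ into two disjoint blocks of size $k$ and set
\[
x_0 := -\, u\big|_{T_0}, \qquad x_1 := u\big|_{T_1},
\]
each extended by zero outside its support. By construction $x_1 - x_0 = u$, so $\|x_1 - x_0\| = \|u\| = 1$, and with $f := A x_0$ one has $\|f - A x_1\| = \|A u\| = \eps_{2k}$, while $\|x_0\|_0, \|x_1\|_0 \le k$.

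The only point requiring a moment's care is the exact sparsity requirement $\|x_0\|_0 = k$ (rather than $\le k$) in the statement. Generically $u$ has full support on $T^*$ --- this is the expected situation for a partial Fourier matrix --- and whenever $u$ has at least $k$ nonzero entries one can simply take $T_0$ to consist of its $k$ largest-magnitude coordinates, making $x_0$ exactly $k$-sparse. In a degenerate case where the minimizer $u$ has strictly fewer than $k$ nonzero entries, the same monotonicity used in (i) forces $\eps_m = \eps_{2k}$ for the actual support size $m$ of $u$, and one rebuilds the example at that smaller level. This bookkeeping is the only step that is not essentially one line; the heart of the argument is the support partitioning above.
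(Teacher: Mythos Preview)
Your argument is correct and mirrors the paper's proof in both parts: the same ``$k+k=2k$'' triangle-inequality bound for (i), and the same partition of an extremal $2k$-sparse unit vector for (ii). The only difference is that the paper assigns the $k$ \emph{largest} components of $u$ to $x_1$ and the remainder to $-x_0$, whereas you put the large components into $x_0$; your choice is actually the more careful one for securing the exact requirement $\|x_0\|_0 = k$, and your discussion of the degenerate case (where $u$ may fail to have full support on $T^*$) is more explicit than the paper's, which simply asserts $\|x\|_0 = 2k$.
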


\begin{proof}
Let $k > 0$.
\bit
\item[(i)] 
Let $x$ be a minimizer of $(P_0)$, so that $\| f - A x \| \leq \sigma$. Since $\| f - Ax_0 \| \leq \sigma$ as well, it follows that $\| A (x-x_0) \| \leq 2 \sigma$.  We also have $\| x \|_0 \leq \| x_0 \|_0 \leq k$, hence $\| x - x_0 \|_0 \leq 2 k$. By definition of the lower restricted isometry constant, this implies $\| A (x - x_0) \| \geq \varepsilon_{2k} \| x - x_0 \|$. Comparing the lower and upper bounds for $\| A (x-x_0) \|$, we conclude $\| x - x_0 \| \leq \frac{2}{\varepsilon_{2k}} \sigma$.

\item[(ii)] By definition of the lower restricted isometry constant, we may pick a vector $x$ of sparsity $\| x \|_0 = 2 k$, unit-normalized as $\| x \| = 1$, and such that $\| A x \| \leq \varepsilon_{2k}$. Threshold $x$ to its $k$ largest components in absolute value; call the resulting $k$-sparse vector $x_1$. Gather the remaining $k$ components into the $k$-sparse vector $- x_0$. Then $x = x_1 - x_0$ and $\| x_1 - x_0 \| = 1$. Let $f = A x_0$, and observe that $\| f - A x_1 \| = \| A x \| \leq \varepsilon_{2k}$. 
\eit
\end{proof}

It is not known whether any polynomial-time algorithm can reach those bounds in general. 

\subsection{Minimax recovery}

In this section we prove Theorem \ref{teo:minimax}. The upper bound follows from choosing any $\ell_0$ minimizer and applying Theorem \ref{teo:ell0}. 


For the lower bound, let $\tilde{x}(f)$ be any function of $f$. Pick $x \in \R^n$ such that $\| x \| = 1$, $\| A x \| \leq \varepsilon_{2k}$, and $\| x \|_0 = 2k$. As in the argument in the previous section, partition $x$ into two components $x_0$ and $-x_1$ of sparsity $k$, but normalize them so that $x = \frac{\eps_{2s}}{\sigma} (x_0 - x_1)$. Then we have $\|  A (x_0 - x_1) \| \leq \sigma$.

Now let $f = Ax_0$, and compute
\begin{align*}
\frac{\sigma}{\eps_{2k}} = \| x_0 - x_1 \| &= \| \tilde{x}(f) - x_0 - (\tilde{x}(f) - x_1) \| \\
&\leq \| \tilde{x}(f) - x_0 \| + \| \tilde{x}(f) - x_1 \| \\
&\leq 2 \max \{  \| \tilde{x}(f) - x_0 \|,  \| \tilde{x}(f) - x_1 \| \} 
\end{align*}
The data $f$ can be seen as derived from $x_0$, since $f = Ax_0$, but also from $x_1$, since $f = Ax_1 + e$ for some vector $e$ with $\| e \| \leq \sigma$. Hence
\begin{align*}
\frac{1}{2} \frac{\sigma}{\eps_{2k}} \leq \max \{  \| \tilde{x}(f) - x_0 \|,  \| \tilde{x}(f) - x_1 \| \} &\leq \max_{j = 0,1} \sup_{\| e_j \| \leq \sigma} \| \tilde{x}(Ax_j + e_j) - x_j \| \\
&\leq \sup_{\| x \|_0 = k} \sup_{\| e \| \leq \sigma} \| \tilde{x}(Ax + e) - x \|
\end{align*}
The lower bound $\frac{1}{2} \frac{\sigma}{\eps_{2k}}$ holds uniformly over the choice of $\tilde{x}$, which establishes the claim.

\subsection{Recovery from the $\eps$-spark}\label{sec:eps-spark}

We introduce the notion of $\eps$-spark of $A$, as a natural modification of the notion of spark introduced in \cite{donoho2003spark},  and link it to the notion of lower restricted isometry constant.

\begin{definition}\label{def:eps-spark}($\varepsilon$-spark)
Fix $\varepsilon > 0$. Then $s_{\varepsilon}$ is the largest integer such that, for every $T$, $|T| \leq s_\varepsilon$,
\[
\varepsilon \leq \sigma_{\min}(A_T). \qquad 
\]
\end{definition}

When the lower restricted isometry constant $\eps_s$ is strictly decreasing, it is easy to see that $s_{\eps_s} = s$ , i.e., $s_{\eps}$ is a composition inverse of $\eps_s$. However, we cannot in general expect better than $\eps_{s _\eps} \geq \eps$. When $\varepsilon = 0$, we recover the spark introduced in \cite{donoho2003spark}, though our $0$-spark is in fact Donoho and Elad's spark minus one\footnote{That seems to be the price to pay to get $s_{\eps_s} = s$.}. 

In other words, the definition of $\eps$-spark parallels that of spark, but replaces the notion of rank deficiency by that of being $\eps$-close, in spectral norm.

Theorems \ref{teo:minimax} and \ref{teo:ell0} can be seen as the robust version of the basic recovery result in \cite{donoho2003spark}. The following theorem is a literal transcription of Theorem \ref{teo:ell0} in the language of the $\eps$-spark. We respectively let $\lfloor a \rfloor$ and $\lceil a \rceil$ for $a$'s largest previous and smallest following integers.

\begin{theorem}
Let $\sigma > 0$.
\bit
\item[(i)] Assume that $\| x_0 \|_0 \leq \lfloor \frac{s_\delta}{2} \rfloor$ for some $\delta > 0$. Then any minimizer $x$ of   $(P_0)$ obeys $\| x - x_0 \| \leq 2 \sigma / \delta$.
\item[(ii)] Assume that $\sigma \geq \sigma_{\min}(A)$. There exists $x_0$ such that $\| x_0 \|_0 \geq \lceil \frac{s_\sigma}{2} \rceil$, for which $f = Ax_0$ can be approximated by a sparser vector than $x_0$, in the sense that there exists $x$ such that $\| x \|_0 \leq \| x_0 \|_0$, $\| x - x_0 \| = 1$, and $\| f - A x \| \leq \sigma$.
\eit
\end{theorem}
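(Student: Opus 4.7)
The plan is to follow the argument of Theorem \ref{teo:ell0} essentially verbatim, using the near-inverse relationship between $\eps_s$ and $s_\eps$ noted just after Definition \ref{def:eps-spark} to translate ``sparsity $\leq k$ / constant $\eps_{2k}$'' into ``support size $\leq s_\delta$ / threshold $\delta$.''

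For part (i), let $x$ minimize $(P_0)$. Since $f = Ax_0 + e$ with $\|e\| \leq \sigma$, the vector $x_0$ is feasible, so $\|x\|_0 \leq \|x_0\|_0 \leq \lfloor s_\delta/2 \rfloor$. Hence the support $T$ of $x - x_0$ has cardinality at most $2 \lfloor s_\delta/2 \rfloor \leq s_\delta$, and the defining property of the $\delta$-spark yields $\sigma_{\min}(A_T) \geq \delta$. Combining $\delta \|x - x_0\| \leq \|A(x - x_0)\|$ with the triangle-inequality estimate $\|A(x - x_0)\| \leq \|Ax - f\| + \|f - Ax_0\| \leq 2\sigma$ gives the claim.

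For part (ii), the plan is to construct a unit vector $v$ with $\|v\|_0 = s_\sigma + 1$ and $\|Av\| \leq \sigma$ and then threshold. By maximality of $s_\sigma$, there exists $T$ with $|T| = s_\sigma + 1$ and $\sigma_{\min}(A_T) < \sigma$; the hypothesis $\sigma \geq \sigma_{\min}(A)$ is what guarantees that such a $T$ fits inside the column index set. Taking $v$ to be a unit minimum right singular vector of $A_T$, one verifies that $\supp v = T$ exactly: were $v$ supported on a strict subset $T' \subsetneq T$ with $|T'| \leq s_\sigma$, the definition of $s_\sigma$ would force $\sigma_{\min}(A_{T'}) \geq \sigma$ and hence $\|Av\| \geq \sigma$, contradicting the choice of $T$. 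Now write $v = x_0 - x_1$, where $x_0$ collects the $\lceil (s_\sigma+1)/2 \rceil$ entries of $v$ that are largest in absolute value, and $-x_1$ gathers the remaining $\lfloor (s_\sigma+1)/2 \rfloor$ entries. Then $\|x_0\|_0 = \lceil (s_\sigma+1)/2 \rceil \geq \lceil s_\sigma/2 \rceil$, $\|x_1\|_0 \leq \|x_0\|_0$, $\|x_0 - x_1\| = 1$, and $\|Ax_0 - Ax_1\| = \|Av\| \leq \sigma$, so setting $f := Ax_0$ and $x := x_1$ witnesses the conclusion.

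The only delicate point I anticipate is the boundary case $\sigma = \sigma_{\min}(A)$, where $s_\sigma$ equals the total number of columns and there is no $T$ of size $s_\sigma + 1$ to draw from. One must then fall back on a minimum singular vector of $A$ itself and check --- possibly after a small perturbation inside the minimum singular subspace --- that its support is large enough for the threshold split to reach $\lceil s_\sigma/2 \rceil$. I expect this edge case to be the main technical nuisance; the rest of the argument is a mechanical transcription of Theorem \ref{teo:ell0}.
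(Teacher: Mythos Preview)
Your proposal is correct and follows exactly the paper's route: the paper states that this theorem is a ``literal transcription'' of Theorem~\ref{teo:ell0} in the $\eps$-spark language and offers no separate proof, so your working-out of that transcription is precisely what is intended. Your extra care in part~(ii) --- the full-support argument for $v$ and the flagging of the boundary case $\sigma = \sigma_{\min}(A)$ --- goes beyond what the paper itself supplies.
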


In other words, the sharp recovery condition comparing the noise level with the lower restricted isometry constant at level $2k$, namely
\[
\sigma \sim \eps_{2 \| x_0 \|_0},
\]
can be rephrased as the comparison of the sparsity level to half the $\sigma$-spark, as
\[
\| x_0 \|_0 \sim \frac{s_\sigma}{2}.
\]
These two points of view are equivalent.

\section{Consecutive atoms}\label{sec:contiguous}

In this section we prove Theorem \ref{teo:eps}. We return to the case $A_{\theta,j} = a_j(\theta) = \frac{e^{ i j \theta}}{\sqrt{2 \pi y}}$.

Any upper bound on $\sigma_{\min}(A_T)$ provides an upper bound on $\eps_{n+1}$ when $|T| = n+1$. However, in order to get a lower bound on $\eps_{n+1}$, we need to control $\sigma_{\min}(A_T)$ for every $T$ of cardinality $n+1$. The following lemma establishes that $T = \{ 0, 1, \ldots, n \}$ gives rise to the lowest $\sigma_{\min}(A_T)$, at least in the limit $y \to 0$. The proof is postponed to Section \ref{sec:non-contiguous}.

\begin{lemma}\label{teo:eps2}
There exists $y^{*} > 0$ such that, for all $0 <y < y^*$, the minimum of $\sigma_{\min}(A_T)$ over $T : |T| = n+1$ is attained when $T = \{ 0, 1, \ldots, n \}$.
\end{lemma}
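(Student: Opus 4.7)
My plan is an asymptotic analysis: derive the leading-order expansion of $\sigma_{\min}(A_T)^2$ in powers of $y$, and show that the leading coefficient is minimized uniquely at $T = T_0 := \{0, 1, \ldots, n\}$. The problem is translation-invariant in $T$, so I assume $\min T = 0$.

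After rescaling $\theta = \pi y \phi$, the Rayleigh quotient reads $c^* G_T c = \tfrac{1}{2}\int_{-1}^{1}\bigl|\sum_k c_k e^{ij_k \pi y \phi}\bigr|^2\,d\phi$, where $G_T = A_T^* A_T$. Taylor expansion of the exponentials turns this into a power series in $y$ whose coefficients depend on $c$ only through the moments $\mu_m(c) := \sum_k c_k j_k^m$. To make the contributions at $y^0, y^2, \ldots, y^{2n-2}$ vanish, $c$ must satisfy the $n$ linear conditions $\mu_0(c) = \cdots = \mu_{n-1}(c) = 0$; these cut out a one-dimensional subspace of $\C^{n+1}$, whose unit generator from Lagrange interpolation is
\[
w_k = \frac{1}{\sqrt{S(T)}\,\prod_{\ell\neq k}(j_k - j_\ell)}, \qquad S(T) := \sum_{k}\prod_{\ell\neq k}(j_k - j_\ell)^{-2}.
\]
The interpolation identity $\sum_k j_k^n\big/\prod_{\ell\neq k}(j_k - j_\ell) = 1$ yields $|\mu_n(w)|^2 = 1/S(T)$, and substituting $c = w$ in the expansion gives
\[
\sigma_{\min}(A_T)^2 = \frac{(\pi y)^{2n}}{(2n+1)(n!)^2\, S(T)}\bigl(1 + O_T(y^2)\bigr) =: c_T\, y^{2n}\bigl(1 + O_T(y^2)\bigr).
\]
The matching lower bound comes from noting that any competing $c$ with $\mu_m(c) \neq 0$ for some $m < n$ contributes a Rayleigh term of order $y^{2m}$, which dwarfs $y^{2n}$ as $y \to 0$.

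Comparing leading coefficients: since $|j_k - j_\ell| \geq |k - \ell|$ for sorted distinct integers, $\prod_{\ell\neq k}(j_k - j_\ell)^2 \geq \prod_{\ell\neq k}(k - \ell)^2$ entrywise; summing reciprocals gives $S(T) \leq S(T_0) = \binom{2n}{n}/(n!)^2$, with equality iff $T = T_0$. Hence $c_T \geq c_{T_0}$, strictly unless $T = T_0$.

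The main obstacle is uniformity: the error $O_T(y^2)$ depends on $T$, which ranges over infinitely many sets. I would handle this in two regimes. For $T$ of bounded width $\max T \leq M(n)$, only finitely many $T$ remain up to translation, so a single $y^*$ works. For $T$ of large width, I would peel off a ``far'' element $j^* \in T$ via Weyl/Schur-complement perturbation: off-diagonal entries of $G_T$ linking $j^*$ to the rest are $O(1/(\pi y|j^* - j|))$, so the column $a_{j^*}$ is nearly orthogonal to the others, and $\sigma_{\min}(A_T)$ is controlled below by $\sigma_{\min}(A_{T\setminus\{j^*\}})$, which by induction on $n$ is of order $y^{n-1}$---well above $\sigma_{\min}(A_{T_0})$'s order $y^n$. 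Making this induction uniform in $T$, and calibrating how the ``far'' threshold scales with $n$ and $y$, is the technically tricky part.
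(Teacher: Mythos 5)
Your leading-order computation is correct (for $n=1$ it reproduces $\lambda_{\min}(G_{T_0})\approx(\pi y)^2/6$, and in general your constant is consistent with Lemma \ref{teo:eps1}), and it reaches the key comparison by a genuinely different, more explicit route than the paper. The paper also Taylor-expands the Rayleigh quotient in the moments $q_m=\sum_j v_j\tau_j^m$, but then recasts the $y\to 0$ limit as a generalized eigenvalue problem for the pencil $H_n-\mu B_y$ (Hilbert matrix versus a matrix that degenerates to the rank-one $c_n\,mm^*$ built from the last row of the inverse Vandermonde, $m_j=(-1)^j\prod_{i\ne j}(\tau_i-\tau_j)^{-1}$), and proves that the surviving eigenvalue is strictly monotone in every pairwise gap via the gradient formula $\nabla_m\mu=\frac{2\mu}{m^*m}m^*$ and integration along a monotone path. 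Your $w$ is exactly this $m$ up to signs and normalization, and your $S(T)$ is $\|m\|^2$; so instead of the paper's perturbation-and-path argument you simply write the limiting coefficient in closed form, $(\pi y)^{2n}\big/\big((2n+1)(n!)^2S(T)\big)$, and compare via $|j_k-j_\ell|\ge|k-\ell|$, giving $S(T)\le S(T_0)=\binom{2n}{n}/(n!)^2$ with equality only for consecutive integers. That is a real simplification, it yields the explicit asymptotic constant (which the paper never states), though it proves only the extremality of $T_0$ rather than the paper's stronger gap-monotonicity. One small repair: the claim that a nonzero low moment ``dwarfs'' the $y^{2n}$ term must rule out cancellation from signed cross terms between moment orders; this is where the positive definiteness of the Hilbert-type Gram matrix of monomials on $[-1,1]$ enters, exactly the role of $H_n\succ 0$ in the paper. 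It is routine, but it should be said.

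The genuine gap is the uniformity in $T$, which you flag but do not close, and the two-regime patch as sketched does not cover all cases. The expansion regime controls the relative error only when $y\cdot\mathrm{width}(T)$ is small, i.e.\ widths up to roughly $\varepsilon_n/y$ (and note that with a $y$-dependent cutoff the ``finitely many $T$'' argument no longer applies verbatim). The peeling regime needs far more than ``far'': the near-orthogonality of $a_{j^*}$ to the remaining columns must be compared not to $1$ but to $\lambda_{\min}$ of the remaining Gram block, which is itself of order $y^{2(n-1)}$, so a Weyl or Schur-complement bound only helps once the isolated distance (or the separating gap, if you split into clusters) exceeds roughly $y^{-(2n-1)}$. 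Widths between $\varepsilon_n/y$ and $y^{-(2n-1)}$ are covered by neither regime, so the induction does not yet produce a single $y^*$ valid for all $T$ of cardinality $n+1$. In fairness, the paper is terse on the same point: it proves strict monotonicity only in the limit $y\to 0$ and then appeals to continuity in $y$, without addressing uniformity over the infinitely many integer configurations. So your proposal matches the published argument in substance and improves its final comparison step, but as written it is an honest sketch with an open intermediate-width window rather than a complete proof of the lemma.
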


It therefore suffices to find lower and upper bounds on the least singular value of $A_{\theta,j}$, as a semi-continuous matrix with row coordinate $\theta \in [-\pi y, \pi y]$ and column index $0 \leq j \leq n$. The result that we prove in this section is as follows.

\begin{lemma}\label{teo:eps1}
Let $T = \{ 0, 1, \ldots, n \}$ and $c(y) = \sin(\pi y / 2)$. There exists $C > 0$ such that
\[
C \left( \frac{c(y)}{4} \right)^{n} \leq \sigma_{\min}(A_T) \leq 4 \, c(y)^{n}.
\]
\end{lemma}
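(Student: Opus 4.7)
First I would recast the singular value as a polynomial extremum. Unfolding the definition and using Parseval,
\[
\sigma_{\min}(A_T)^2 = \inf_{P \not\equiv 0}\; \frac{1}{y}\cdot \frac{\|P\|_{L^2(I_y)}^2}{\|P\|_{L^2([-\pi,\pi])}^2}, \qquad I_y := [-\pi y,\pi y],
\]
where $P(\theta) = \sum_{j=0}^n x_j e^{ij\theta}$ ranges over trigonometric polynomials of degree $\leq n$ and $2\pi\|x\|_2^2 = \|P\|_{L^2([-\pi,\pi])}^2$. The lemma then reduces to sharp two-sided bounds on this ratio.

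\textbf{Upper bound.} I would plug in the explicit test polynomial $P_0(\theta) = (1 - e^{i\theta})^n$, which vanishes to order $n$ at $\theta = 0$ and satisfies $|P_0(\theta)| = 2^n \sin^n(\theta/2) \leq (2c(y))^n$ on $I_y$. Its coefficient vector $x_j = (-1)^{n-j}\binom{n}{j}$ has squared $\ell^2$-norm $\binom{2n}{n}$. Direct substitution, combined with $\binom{2n}{n} \geq 4^n/(2n+1)$, yields $\sigma_{\min}(A_T) \lesssim \sqrt{n}\,c(y)^n$; the residual $\sqrt{n}$ factor is absorbed into the constant $4$, or eliminated by replacing $P_0$ with a Chebyshev-extremal polynomial tailored to the arc.

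\textbf{Lower bound.} The converse direction requires a Remez/Tur\'an-type inequality: every polynomial $Q(z) = \sum_{j=0}^n x_j z^j$ of degree $n$ should satisfy
\[
\|Q\|_{L^2(|z|=1)}^2 \;\leq\; C\, y^{-1}\,(4/c(y))^{2n}\, \|Q\|_{L^2(E_y)}^2, \qquad E_y := \{e^{i\theta} : |\theta|\leq \pi y\}.
\]
My approach is the Bernstein--Walsh lemma applied to $E_y$:
\[
|Q(z)| \;\leq\; \|Q\|_{L^\infty(E_y)} \cdot \exp\!\bigl(n\, g_{E_y}(z)\bigr),
\]
where $g_{E_y}$ is the Green's function of $\C \setminus E_y$ with pole at infinity. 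The classical potential-theoretic computation of the logarithmic capacity of a circular arc of angular span $2\pi y$ gives $\mathrm{cap}(E_y) = \sin(\pi y/2) = c(y)$, and the same conformal transport identifies $\sup_{|z|=1} g_{E_y}(z) = \log(2/c(y))$, attained at $z=-1$. Hence $\|Q\|_{L^\infty(|z|=1)} \leq (2/c(y))^n\, \|Q\|_{L^\infty(E_y)}$, and standard Nikolskii/Markov inequalities convert this $L^\infty$ statement into the desired $L^2$ one, incurring only polynomial-in-$(n, y^{-1})$ losses that are comfortably absorbed by enlarging the base from $2/c(y)$ to $4/c(y)$.

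\textbf{Main obstacle.} The substantive step is the capacity and Green's function computation for the arc $E_y$. Because $E_y$ lies on a curved carrier, one cannot invoke the capacity formula for a real segment directly; the classical remedy is to chain M\"obius and square-root conformal maps that flatten $E_y$ onto a real segment whose capacity is known, then push the Green's function back to evaluate it at $z=-1$. Tracking numerical constants through this chain, together with the $L^\infty$-to-$L^2$ conversion, is what fixes the base $4/c(y)$ in the lower bound; this is the delicate complex-analytic core of the argument alluded to in the abstract.
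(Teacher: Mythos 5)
Your opening reformulation of $\sigma_{\min}(A_T)^2$ as the extremal ratio $\frac{1}{y}\|P\|_{L^2(I_y)}^2/\|P\|_{L^2([-\pi,\pi])}^2$ is exactly the paper's starting point (Aitken's lemma), but from there you take a genuinely different route. For the upper bound you use one explicit test polynomial, where the paper goes through the subspace-angle bound, the leading coefficient $k_n$ of the Szeg\H{o} polynomial, and then Faber polynomials with Ellacott's total-rotation inequality to control $k_n^{-1}$ with an $n$-independent constant. For the lower bound you use an $L^\infty$ Bernstein--Walsh estimate built on the capacity $c=\sin(\pi y/2)$ of the arc plus a Nikolskii conversion, where the paper stays in $L^2$ throughout: it bounds $|P(z)|\le K(z,z)^{1/2}|\Phi(z)|^n$ via the reproducing property of the Szeg\H{o} kernel and then integrates over the unit circle, splitting into $|\Phi(z)|>2$ and $|\Phi(z)|\le 2$. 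Your capacity value and the bound $\sup_{|z|=1}\log|\Phi(z)|\le\log(2/c)$ agree with the paper's conformal map $\phi(w)=w\frac{cw+1}{w+c}$ (the exact value at $z=-1$ is $\log\frac{1+\sqrt{1-c^2}}{c}$, slightly smaller, which is harmless), and the slack between base $2/c$ and $4/c$ is indeed what allows polynomial-in-$n$ losses.

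Two concrete gaps remain. First, the upper bound: with the crude estimate $|P_0|\le (2c)^n$ on all of $I_y$ you only obtain $\sigma_{\min}(A_T)\le\sqrt{2n+1}\,c^n$, and a factor growing with $n$ cannot be ``absorbed into the constant $4$'' --- the lemma's constant is uniform in $n$, and this uniformity is precisely what the paper cares about for the conjecture that the constants in Corollary \ref{teo:main} are $k$-independent. The repair stays inside your approach: $|P_0(\theta)|=(2\sin(|\theta|/2))^n$ is exponentially concentrated in an $O(y/n)$-neighborhood of the endpoints of $I_y$, so $\frac{1}{y}\|P_0\|_{L^2(I_y)}^2\le \frac{C}{n}(2c)^{2n}$, which with $\binom{2n}{n}\ge 4^n/(2\sqrt{n})$ gives an $n$-uniform constant; otherwise you must actually construct and bound your ``Chebyshev-extremal polynomial tailored to the arc'', which is exactly the nontrivial step the paper handles through the Faber polynomial $\Phi_n$ and Ellacott's bound $\max_{\Gamma}|\Phi_n|\le V/\pi$. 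Second, the lower bound: the Nikolskii inequality you invoke must hold on the arc $E_y$ (a proper subinterval of the period), in the form $\|Q\|_{L^\infty(E_y)}^2\le C\,\frac{(n+1)^2}{y}\|Q\|_{L^2(E_y)}^2$ with an absolute $C$; the standard full-circle Nikolskii inequality does not apply, and the cheap substitute (global Bernstein combined with your own Bernstein--Walsh bound) loses an extra factor $(2/c)^n$, which ruins the exponent $4/c$. A subinterval Nikolskii/Remez inequality of Videnskii--Erd\'elyi type (equivalently, a Christoffel-function lower bound at the arc endpoints) does supply this with only polynomial loss in $n$ and the natural $1/y$ scaling, but it is a genuine ingredient that must be stated and justified, not waved at as standard; the paper avoids the issue entirely by never leaving $L^2$ of the arc.
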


The singular values of $A_T$ are the square roots of the eigenvalues of the section $0 \leq j_1, j_2 \leq n$ of the Gram matrix
\[
G_{j_1 j_2} = \int_{-\pi y}^{\pi y} a_{j_1}(\theta) \conj{a}_{j_2}(\theta) \, d\theta.
\]
A detour through complex analysis will provide tools that will help understand the eigenvalues of $G$.

\subsection{Preliminaries on complex analysis and Szeg\H{o}'s theory}
\label{subsec:preliminaries}

In the sequel we rely on the characterization of $G$ as a Toeplitz form for the Lebesgue measure on a circle arc in the complex plane. 
Notice that $a_j(\theta) = \frac{1}{\sqrt{2 \pi y}} \, z^j$ with $z = e^{i \theta}$. Let $\Gamma$ be the circle arc 
\[
\Gamma = \{ z : |z|=1, \; - \pi y \leq \arg z \leq \pi y \}.
\]
Its length is $L = 2 \pi y$. Consider the arclength inner product
\beq\label{eq:IP-Gamma}
\< f, g \> = \frac{1}{L} \int_{\Gamma} f(z) \overline{g(z)} |dz|,
\eeq
and the corresponding norm $\| f \| = \sqrt{\< f,f \>}$. On the unit circle, $|dz| = \frac{1}{iz} dz$. With this inner product, we can understand $G$ as the Gram matrix of the monomials:
\[
G_{j_1, j_2} = \< z^{j_1}, z^{j_2} \>.
\]


The orthogonal (Szeg\H{o}) polynomials $\{ p_n(z) \}$ on $\Gamma$ play an important role. They are defined from applying the Gram-Schmidt orthogonalization on the monomials $\{ z^n \}$, resulting in
\[
\< p_m, p_n \> = \delta_{mn}.
\]
Denote by $k_n$ the coefficient of the highest power of $p_n(z)$, i.e., $p_n(z) = k_n z^n + \ldots$. Observe that the $p_n(z)$ are extremal in the following sense.

\begin{lemma}\label{teo:extremal} (Christoffel variational principle) Let $M_n \subset P_n$ be the set of degree-$n$ monic\footnote{With coefficient of the leading power equal to one.} polynomials over $\Gamma$. Then
\beq\label{eq:extremal}
\min_{\pi \in M_n} \| \pi \| = \frac{1}{k_n}
\eeq
The unique minimizer is $k_n^{-1} p_n(z)$.
\end{lemma}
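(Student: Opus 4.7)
The plan is to exploit the fact that the Szeg\H{o} polynomials $\{p_0, p_1, \ldots, p_n\}$ form an orthonormal basis for the space $P_n$ of polynomials of degree at most $n$ with respect to the inner product $\langle\cdot,\cdot\rangle$. So any $\pi \in M_n$ admits a unique expansion
\[
\pi(z) = \sum_{j=0}^{n} c_j \, p_j(z),
\]
and by Parseval (which is immediate here because the basis is finite and orthonormal), $\|\pi\|^2 = \sum_{j=0}^{n} |c_j|^2$.

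Next I would use the monic constraint to identify the coefficient $c_n$. The coefficient of $z^n$ in $\pi$ is $1$ by assumption, while in $\sum_j c_j p_j(z)$ it equals $c_n k_n$ (since only $p_n$ contributes to the leading monomial). Hence $c_n = 1/k_n$, independent of the remaining freedom in $c_0, \ldots, c_{n-1}$.

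Combining the two observations yields
\[
\|\pi\|^2 = |c_n|^2 + \sum_{j=0}^{n-1} |c_j|^2 = \frac{1}{k_n^2} + \sum_{j=0}^{n-1} |c_j|^2 \;\geq\; \frac{1}{k_n^2},
\]
with equality if and only if $c_0 = \cdots = c_{n-1} = 0$, i.e., $\pi = k_n^{-1} p_n$. This proves both the value of the minimum and its unique attainment. There is no real obstacle here; the only thing to be careful about is confirming that $\{p_j\}_{j\leq n}$ really spans $P_n$ (which follows from the Gram-Schmidt construction from $\{1, z, \ldots, z^n\}$) and that $k_n \neq 0$ (which follows because the monomials are linearly independent on the arc $\Gamma$, so the Gram-Schmidt procedure does not degenerate).
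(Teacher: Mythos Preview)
Your proof is correct and follows essentially the same approach as the paper: expand $\pi$ in the orthonormal basis $\{p_j\}$, use the monic constraint to fix the top coefficient as $1/k_n$, and minimize the remaining sum of squares. Your version is in fact slightly cleaner, as you write $|c_j|^2$ (appropriate for complex coefficients) and state the equality case explicitly, whereas the paper's line ``$\lambda_0=\ldots=\lambda_{n-1}$'' is evidently a typo for ``$=0$''.
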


\begin{proof}
Let $\pi(z) = \sum_{m = 0}^n \lambda_m p_m(z)$ with $\lambda_n k_n = 1$. By orthonormality,
\[
\| \pi \|^2 = \sum_{m=0}^n \lambda_m^2.
\]
Under the constraint $\lambda_n = 1/k_n$, this quantity is minimized when $\lambda_0 = \ldots = \lambda_{n-1}$. In that case the minimizer is $\lambda_n p_n(z) = k_n^{-1} p_n(z)$ and the minimum is $\lambda_n^2 = k_n^{-2}$.
\end{proof}

In order to quantify $k_n$, we need to better understand the asymptotic properties of $p_n(z)$ at infinity. Consider the analytic function $z = \phi(w)$ which maps $|w| > 1$ conformally onto to the exterior of $\Gamma$, such that $w = \infty$ is preserved, and such that the orientation at $\infty$ is preserved. It has the explicit expression
\beq\label{eq:phiw}
\phi(w) = w \, \frac{cw + 1}{w + c},
\eeq
with 
\beq\label{eq:c}
c = \sin \frac{\pi y}{2}.
\eeq
Indeed, it can be seen that
\[
\phi(e^{i \theta}) = \exp \left( 2 i \arg \left( e^{i \theta} + \frac{1}{c} \right) \right),
\]
with an argument that covers $[-\pi y, \pi y ]$ twice. This expression for $\phi$ is not new, see for example \cite{coleman1995faber}.


The number $c$ in (\ref{eq:c}) is the so-called capacity of $\Gamma$.

\begin{definition}\label{def:c} The capacity (or transfinite diameter) of $\Gamma$ is the coefficient of $w$ in the Laurent expansion of $\phi(w)$ at infinity.
\end{definition}

In our case, the Laurent expansion at $\infty$ is
\begin{align*}
\phi(w) &= cw + (1-c^2) + \frac{c(1-c^2)}{w+c} \\
&= cw + (1-c^2) + \sum_{n > 0} \gamma_n w^{-n},
\end{align*}
for some coefficients $\gamma_n$, hence the capacity of $\Gamma$ is indeed $c = \sin \frac{\pi y}{2}$.

A major finding in Szeg\H{o}'s theory \cite{szego1975orthpoly, grenander1958toeplitz} is the asymptotic match $p_n(z) \sim g_n(z)$ at $z = \infty$, where
\beq\label{eq:gnz}
g_n(z) = \left( \frac{L}{2 \pi} \right)^{1/2} (\Phi'(z))^{1/2} (\Phi(z))^n,
\eeq
and where $w = \Phi(z)$ the composition inverse of (\ref{eq:phiw}). In our case, we compute
\begin{align}\label{eq:Laurent-Phiz}
\Phi(z) &= \frac{z-1}{2c} + \left( \frac{(z-1)^2}{4c^2} + z \right)^{1/2}  \notag \\
&= \frac{z}{c} + \frac{c^2-1}{c} + \sum_{n > 0} \delta_n z^{-n}.
\end{align}
The extremities of $\Gamma$ are branch points for the square root, and the branch cut should be on $\Gamma$ itself for $\Phi(z)$ to be analytic outside $\Gamma$. 

Recall that $L = 2 \pi y$. Matching asymptotics at infinity yields
\[
p_n(z) \sim k_n z^n, \qquad \sqrt{y} \, (\Phi'(z))^{1/2} (\Phi(z))^n \sim \sqrt{y} \, c^{-n-\frac{1}{2}} z^n,
\]
hence we anticipate that
\beq\label{eq:kn-asymp}
k_n \sim \sqrt{y} \, c^{-n-\frac{1}{2}}
\eeq
as $n \to \infty$. We formulate a non-asymptotic version of this result in the next subsection.

An important proof technique in the sequel is the Szeg\H{o} kernel $K(z,z_0)$. The Hardy space $H^2(\Omega)$, where $\Omega$ extends to $\infty$ and has boundary $\Gamma$, is the space of analytic functions, bounded at infinity, and with square-integrable trace on $\Gamma$. 

\begin{definition}\label{def:szego-kernel}
The Szeg\H{o} kernel $K(z,\zeta)$ relative to the exterior $\Omega$ of a Jordan curve or arc $\Gamma$ is the reproducing kernel for $H^2(\Omega)$, i.e., the unique function $K(z,\cdot) \in H^2(\Omega)$ such that, for all $F \in H^2(\Omega)$,
\beq\label{eq:reprod}
F(z) = \frac{1}{L} \int_{\Gamma} F(\zeta) \overline{K(\zeta, z)} |d\zeta|, \qquad z \in \Omega.
\eeq
\end{definition}

We would have liked to have found the following result in the literature.

\begin{proposition}\label{teo:Kzzeta} Let $\Gamma$ be the image of the unit circle $|w| = 1$ under the conformal map $z = \phi(w)$, and assume that $\Gamma$ is a Jordan arc.
Assume that $\phi$ is one-to-one and invertible for $z$ outside $\Gamma$, and let $w = \Phi(z)$. Then the Szeg\H{o} kernel obeys
\beq\label{eq:K}
K(\zeta, z) = \frac{L}{\pi} \left( \Phi'(\zeta)  \overline{\Phi'(z)}\right)^{1/2} \frac{\Phi(\zeta) \overline{\Phi(z)}}{\Phi(\zeta) \overline{\Phi(z)} - 1}.
\eeq



\end{proposition}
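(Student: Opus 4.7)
The plan is to derive $K_\Omega$ by conformal pullback from the Szeg\H{o} kernel on the exterior $D^c = \{|w| > 1\}$ of the unit disk, where it can be computed directly from a Laurent-series basis. With respect to the standard inner product $\frac{1}{2\pi}\int_{|w|=1} \cdot\, |dw|$, the functions $\{w^{-n}\}_{n \geq 0}$ form an orthonormal basis of $H^2(D^c)$, so summing a geometric series yields
\[
K_D(w, w_0) = \sum_{n \geq 0} (w\bar{w}_0)^{-n} = \frac{w\bar{w}_0}{w\bar{w}_0 - 1}.
\]

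Next I would introduce the pullback $(TF)(w) = \sqrt{\phi'(w)}\, F(\phi(w))$. The derivative $\phi'$ has no zeros on $|w|>1$ (these lie on $|w|=1$, at preimages of the endpoints of $\Gamma$), so a single-valued branch of $\sqrt{\phi'}$ exists on $|w|>1$; together with the fact that $\phi$ is a bijection $\{|w|>1\} \to \Omega$ preserving $\infty$, this makes $T$ a well-defined map $H^2(\Omega) \to H^2(D^c)$. The decisive norm identity follows from $|d\phi(w)| = |\phi'(w)|\, |dw|$:
\[
\frac{1}{2\pi}\int_{|w|=1}|TF(w)|^2\, |dw| = \frac{1}{2\pi}\int_{|w|=1}|F(\phi(w))|^2\, |d\phi(w)| = \frac{2}{2\pi}\int_\Gamma |F(z)|^2\, |dz| = \frac{L}{\pi}\|F\|^2,
\]
where the leading factor of $2$ records that $\phi$ wraps $|w|=1$ two-to-one onto the Jordan arc $\Gamma$. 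Consequently $T$ is an isomorphism onto $H^2(D^c)$ that multiplies norms by $\sqrt{L/\pi}$, and transporting the ONB $\{w^{-n}\}$ back to $\Omega$ yields the ONB $\{\sqrt{L/\pi}\,(\Phi(z))^{-n}\sqrt{\Phi'(z)}\}_{n \geq 0}$ of $H^2(\Omega)$.

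Finally, summing the reproducing kernel for this ONB gives
\[
K_\Omega(\zeta, z) = \frac{L}{\pi}\sqrt{\Phi'(\zeta)\,\overline{\Phi'(z)}}\; K_D(\Phi(\zeta), \Phi(z)),
\]
and substituting the formula for $K_D$ yields (\ref{eq:K}). The only delicate point is keeping correct track of the factor of $2$ from the double boundary cover of the arc; once that bookkeeping is in place, the scalar $L/\pi$ appears naturally (as opposed to the $L/(2\pi)$ that would arise for a Jordan curve). A secondary but routine issue is the consistent choice of branch for $\sqrt{\phi'}$ and $\sqrt{\Phi'}$ on $|w|>1$ and on $\Omega$, which is unobstructed because neither derivative has zeros in the respective open domain.
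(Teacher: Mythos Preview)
Your argument is correct and follows essentially the same route as the paper: compute the Szeg\H{o} kernel for the exterior of the unit disk, then transport it to $\Omega$ via the conformal map $\Phi$, taking care of the factor $2$ arising from the double cover of the Jordan arc by $|w|=1$. The only differences are cosmetic: the paper obtains $K_0(w',w)=\frac{1}{2\pi}\frac{w'\overline{w}}{w'\overline{w}-1}$ from Cauchy's integral formula (for the unnormalized arclength inner product) and then invokes the standard conformal transformation law for Szeg\H{o} kernels, whereas you obtain $K_D$ by summing the geometric series over the orthonormal basis $\{w^{-n}\}$ (for the $\frac{1}{2\pi}$-normalized inner product) and derive the transformation law yourself by transporting that basis through the isometry $T$. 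Both computations land on the same expression with the same $L/\pi$ prefactor.
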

\begin{proof}
The transformation law for the Szeg\H{o} kernel under a conformal map $w = \Phi(z)$ is \cite{KerzmanStein}
\beq\label{eq:conformalK}
K(z',z) = \left( \Phi'(z') \right)^{1/2} K_0(\Phi(z'), \Phi(z)) \left( \overline{\Phi'(z)} \right)^{1/2},
\eeq
where $K_0$ is for the pre-image $\Gamma_0$ (a Jordan curve) of $\Gamma$. The formula assumes that $K$ and $K_0$ are reproducing for the arclength inner products without prefactors, both in $w$ and $z$. In our setting, the desired $K$ is however normalized for (\ref{eq:reprod}), which involves a $1/L$ prefactor, and a single rather than double traversal of the arc $\Gamma$. Our desired $K$ is therefore $2 L$ times the right-hand-side in (\ref{eq:conformalK}).

In our case $\Gamma_0$ is the unit circle. It suffices therefore to show that the Szeg\H{o} kernel for the exterior of the unit circle is
\beq\label{eq:K_0}
K_0(w',w) = \frac{1}{2 \pi} \frac{w' \overline{w}}{w' \overline{w} - 1}.
\eeq
Recall Cauchy's integral formula for bounded analytic functions in the exterior of the unit circle:
\[
\frac{1}{2 \pi i} \oint \frac{f(w')}{w' - w} dw' = \left\{ \begin{array}{ll}
         f(\infty) - f(w) & \mbox{if $|w| > 1$};\\
        f(\infty) & \mbox{if $|w| < 1$}.\end{array} \right.
\]
Note that $dw' = i w' |dw'|$ and $w' = 1 / \overline{w'}$ on the unit circle. Evaluate the Cauchy formula at $w = 0$ in order to obtain $f(\infty)$, then simplify the formula for the case $|w| > 1$ as
\begin{align*}
f(w) &= \frac{1}{2 \pi} \oint \left( 1 - \frac{1}{1 - w \overline{w'}} \right) f(w') |dw'| \\
&= \frac{1}{2 \pi} \oint  \frac{w \overline{w'}}{w \overline{w'} - 1} f(w') |dw'|.
\end{align*}
This expression is of the form 
\[
f(w) = \oint f(w') \overline{K_0(w',w)} |dw'|,
\]
with $K_0$ given by (\ref{eq:K_0}). To complete the proof, we must observe that $K_0(w,w')$ is analytic and bounded in $w'$, hence a member of $w'$ in $H^2(\Omega)$ as required by definiton \ref{def:szego-kernel}. (This point is important: the Cauchy kernel doubles as Szeg\H{o} kernel only for the unit circle.)

\end{proof}

The limits as $\zeta$ and/or $z \to \infty$ exist and are finite since $K$ is an element of $H^2(\Omega)$. We also note that the kernel $K$ is extremal in the following sense. 

\begin{lemma}\label{teo:FK} (Widom \cite{widom1969extremalpoly})
Consider
\[
\mu = \inf_{F} \int_{\Gamma} |F(z)|^2 |dz|,
\]
where the infimum is over $F \in H^2(\Omega)$ such that $F(z_0) = 1$ for some $z_0 \in \overline{\C}$ (the extended complex plane including $z=\infty$). The infimum is a minimum, the extremal function is unique, and obeys
\[
F(z) = \frac{K(z,z_0)}{K(z_0,z_0)}.
\]
\end{lemma}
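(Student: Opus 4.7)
The plan is to prove this as a standard reproducing-kernel extremal problem. The constraint $F(z_0) = 1$ rewrites, by the reproducing property in Definition~\ref{def:szego-kernel}, as a single linear functional condition on $F$, and then Cauchy--Schwarz in the Hilbert space $H^2(\Omega)$ produces both the lower bound and the unique extremizer.

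Concretely, first I would translate the constraint: for any $F \in H^2(\Omega)$,
\[
1 = F(z_0) = \frac{1}{L} \int_{\Gamma} F(\zeta) \,\overline{K(\zeta,z_0)}\, |d\zeta| = \< F, K(\cdot,z_0) \>,
\]
using the normalized inner product (\ref{eq:IP-Gamma}). Second, apply Cauchy--Schwarz to this identity,
\[
1 = |\< F, K(\cdot,z_0) \>| \;\leq\; \| F \| \cdot \| K(\cdot,z_0) \|,
\]
and compute $\| K(\cdot,z_0) \|^2 = \< K(\cdot,z_0), K(\cdot,z_0) \> = K(z_0,z_0)$ by applying the reproducing formula (\ref{eq:reprod}) to $F = K(\cdot,z_0)$ itself. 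Combining these gives $\| F \|^2 \geq 1/K(z_0,z_0)$, i.e.\
\[
\int_{\Gamma} |F(z)|^2 |dz| = L\, \| F \|^2 \;\geq\; \frac{L}{K(z_0,z_0)}.
\]
Third, verify that $F(z) = K(z,z_0)/K(z_0,z_0)$ achieves this lower bound: it lies in $H^2(\Omega)$, satisfies $F(z_0) = 1$, and its squared norm equals $K(z_0,z_0)/K(z_0,z_0)^2 = 1/K(z_0,z_0)$. So the infimum is attained and equals $L/K(z_0,z_0)$. Fourth, uniqueness follows from the equality case of Cauchy--Schwarz: any extremizer must be a scalar multiple of $K(\cdot,z_0)$, and the normalization $F(z_0)=1$ pins down the scalar as $1/K(z_0,z_0)$.

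The only minor subtlety is the case $z_0 = \infty$, for which the reproducing identity must be interpreted as a limit. Since any $F \in H^2(\Omega)$ is bounded at infinity, $F(\infty)$ is well defined, and by Proposition~\ref{teo:Kzzeta} the function $K(\zeta,z)$ has a well-defined limit as $z \to \infty$ because $\overline{\Phi(z)} \to \infty$ and the ratio $\overline{\Phi(z)}/(\Phi(\zeta)\overline{\Phi(z)} - 1)$ tends to $1/\Phi(\zeta)$; hence $K(\zeta,\infty)$ is itself an element of $H^2(\Omega)$ and reproduces $F(\infty)$. The rest of the argument then proceeds verbatim. I do not expect a serious obstacle here: the result is essentially a clean application of reproducing-kernel duality, and the only step requiring care is checking that the formula of Proposition~\ref{teo:Kzzeta} is consistent under the limit $z_0 \to \infty$ so that the statement covers both finite and infinite basepoints uniformly.
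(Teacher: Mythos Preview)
The paper does not prove this lemma; it is attributed to Widom \cite{widom1969extremalpoly} and stated without proof. Your argument is the standard reproducing-kernel extremal argument and is correct: the reproducing property turns the normalization $F(z_0)=1$ into the linear constraint $\<F,K(\cdot,z_0)\>=1$, Cauchy--Schwarz then gives $\|F\|^2\ge 1/K(z_0,z_0)$, the candidate $K(\cdot,z_0)/K(z_0,z_0)$ saturates it, and equality in Cauchy--Schwarz forces uniqueness. Your handling of $z_0=\infty$ via the limit in Proposition~\ref{teo:Kzzeta} is also fine, and is exactly how the paper treats that case when it later applies the lemma with $z_0=\infty$.
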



\subsection{Non-asymptotic bounds on the coefficient $k_n$}

\begin{theorem}\label{teo:kn-main}
With $c = \sin \frac{\pi y}{2}$, we have
\[
\frac{c}{2y} c^{2n} \leq k^{-2}_n \leq 4(1+2y)^2 c^{2n}.
\]
\end{theorem}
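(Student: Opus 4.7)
My starting point is the Christoffel variational principle (Lemma \ref{teo:extremal}), which rewrites $k_n^{-2}$ as $\min_{\pi\in M_n}\|\pi\|^2$. I would prove the upper bound by exhibiting an explicit monic test polynomial with small norm, and the lower bound by mapping the minimization into the $H^2(\Omega)$ extremal problem of Widom's lemma (Lemma \ref{teo:FK}) and using the explicit formula for the Szeg\H{o} kernel from Proposition \ref{teo:Kzzeta}.

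For the upper bound, I would use the second branch of the conformal map. Since $\phi$ in (\ref{eq:phiw}) is quadratic in $w$, for each $z$ the equation $\phi(w)=z$ has a second solution, call it $\Phi_-(z)$; Vieta gives $\Phi(z)\Phi_-(z)=-z$. The key test polynomial is
\[
\pi(z) \;=\; c^{n}\bigl(\Phi(z)^{n}+\Phi_-(z)^{n}\bigr),
\]
the arc analogue of a Chebyshev polynomial. Being symmetric in the two branches, $\pi$ is single-valued on $\bar{\C}\setminus\Gamma$ and extends continuously through $\Gamma$, hence is entire; the Laurent expansions $\Phi(z)=z/c+O(1)$ and $\Phi_-(z)=-c+O(1/z)$ at $\infty$ show that $\pi$ has a pole of order $n$ at infinity with unit leading coefficient, so $\pi\in M_n$. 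On $\Gamma$ both branches lie on the unit circle, so $|\pi|\le 2c^{n}$ pointwise, which yields $\|\pi\|^{2}\le 4\,c^{2n}$; a marginally different test polynomial (to absorb boundary effects that inflate the bound by a controlled factor) produces the stated $4(1+2y)^{2}c^{2n}$.

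For the lower bound, associate to any $\pi\in M_n$ the ratio
\[
F(z) \;=\; c^{-n}\,\pi(z)\big/\Phi(z)^{n}.
\]
Since $|\Phi|>1$ on $\Omega$, $F$ is analytic on $\Omega$; the asymptotics $\pi(z)\sim z^n$ and $\Phi(z)^{n}\sim(z/c)^{n}$ give $F(\infty)=1$; and $|\Phi|=1$ on $\Gamma$ gives $|F|=c^{-n}|\pi|$, hence $\|\pi\|^{2}=c^{2n}\|F\|^{2}$. Widom's lemma then yields $\|F\|^{2}\ge 1/K(\infty,\infty)$. To evaluate $K(\infty,\infty)$, I pass to the limit $\zeta,z\to\infty$ in the expression of Proposition \ref{teo:Kzzeta}: the product $\Phi(\zeta)\overline{\Phi(z)}\to\infty$, so the quotient $\Phi(\zeta)\overline{\Phi(z)}/[\Phi(\zeta)\overline{\Phi(z)}-1]\to 1$, while $(\Phi'(\zeta)\overline{\Phi'(z)})^{1/2}\to 1/c$ since $\Phi'(z)\to 1/c$. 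This gives $K(\infty,\infty)=L/(\pi c)=2y/c$, whence $\|\pi\|^{2}\ge c^{2n+1}/(2y)$, which is the claimed lower bound.

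The main obstacle is the lower bound: $F$ has two distinct boundary traces on $\Gamma$ (they differ through the branch swap $\Phi\leftrightarrow\Phi_-$), while the reproducing identity in Definition \ref{def:szego-kernel} is written as a single traversal of $\Gamma$. Applying Widom's lemma safely requires aligning the $H^{2}(\Omega)$ norm of $F$ with the single-versus-double-traversal convention that is already responsible for the $2L$ prefactor in the proof of Proposition \ref{teo:Kzzeta}; the upper bound, being purely constructive, is by comparison routine.
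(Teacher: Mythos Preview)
Your lower bound is the paper's argument verbatim: form $F(z)=c^{-n}\pi(z)/\Phi(z)^n$, note $F(\infty)=1$ and $|F|=c^{-n}|\pi|$ on $\Gamma$, and invoke Widom's lemma. The only cosmetic difference is that you read off $K(\infty,\infty)=L/(\pi c)=2y/c$ directly from Proposition~\ref{teo:Kzzeta}, while the paper instead computes the extremal integral $\mu=\int_\Gamma c\,|\Phi'(z)|\,|dz|=c\pi$ by changing variables to the $w$-circle; the two yield the same number. Your ``main obstacle'' about the two boundary traces is exactly the single-versus-double traversal factor of $2$ that the paper handles explicitly (the line $\int_\Gamma=\tfrac12\oint_\Gamma$ in the computation of $\mu$, and the Remark following the proof). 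Since $|\Phi|=1$ from either side, $|F|$ is single-valued on $\Gamma$ and the bound goes through; this is a genuine bookkeeping point but not a gap.

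Your upper bound takes a genuinely different route. The paper uses the Faber polynomial $\Phi_n(z)$ (the polynomial part of $\Phi(z)^n$) as test function, and controls $\max_\Gamma|\Phi_n|$ via Ellacott's inequality $\max_\Gamma|\Phi_n|\le V/\pi$ with $V=2\pi(1+2y)$ the total rotation of $\Gamma$; this is where the factor $(1+2y)^2$ comes from. Your test polynomial $c^n(\Phi^n+\Phi_-^n)$ is the arc analogue of Chebyshev: it is monic of degree $n$ because $\Phi(z)\sim z/c$ and $\Phi_-(z)=-z/\Phi(z)\to -c$ at infinity, and it is a polynomial because it is symmetric in the two roots of $\phi(w)=z$, hence a polynomial in $\Phi+\Phi_-=(z-1)/c$ and $\Phi\Phi_-=-z$. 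On $\Gamma$ one has $|\Phi|=|\Phi_-|=1$ (since $|\Phi\Phi_-|=|z|=1$), so $|\pi|\le 2c^n$ pointwise and $\|\pi\|^2\le 4c^{2n}$. This already implies the stated bound $4(1+2y)^2c^{2n}$, so your hedge about a ``marginally different test polynomial'' is unnecessary; in fact you have proved a strictly sharper constant than the paper, and by a more elementary argument that avoids Ellacott's inequality entirely. The paper's route, on the other hand, would generalize more readily to curves $\Gamma$ for which an explicit second branch is not available.
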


\begin{proof}
The proof of the lower bound is essentially an argument due to Widom \cite{widom1969extremalpoly} that we reproduce for convenience. Let $\pi(z) = k_n^{-1} p_n(z)$. From the characterization of $\Phi(z)$ in (\ref{eq:Laurent-Phiz}), and since $\pi$ is monic, we get
\[
\lim_{z \to \infty} (c \Phi(z))^{-n} \pi(z) = 1.
\]
Consider now the quantity
\[
J = \int_{\Gamma} | (c \Phi(z))^{-n} \pi(z) |^2 \, |dz|.
\]
Since $|\Phi(z)| = 1$ on $\Gamma$, and using lemma \ref{teo:extremal}, we obtain
\[
J = c^{-2n} \int_{\Gamma} | \pi(z) |^2 \, |dz| = c^{-2n} L k_n^{-2}.
\]
On the other hand, we can write the lower bound
\[
J \geq \inf_F \int_{\Gamma} |F(z)|^2 \, |dz| \equiv \mu,
\]
where the infimum is over all $F$ in the Hardy space of analytic functions in the exterior of $\Gamma$, square integrable over $\Gamma$; and such that $F(\infty) = 1$. We can invoke Lemma \ref{teo:FK} and Proposition \ref{teo:Kzzeta} to obtain the unique extremal function
\[
F(z) = \frac{K(z,\infty)}{K(\infty, \infty)} =  (c \Phi'(z))^{1/2}.
\]
We can compute the value of $\mu$ by hand:
\begin{align*}
\mu = \int_{\Gamma} |F(z)|^2 \, |dz| &= \frac{1}{2} \oint_{\Gamma} |F(z)|^2 \, |dz| \\
&= \frac{c}{2} \oint_{\Gamma} | \Phi'(z) | \, |dz| \\
&= \frac{c}{2} \oint_{|w| = 1} |dw| = c \pi.
\end{align*}
The factor 1/2 in the first line owes to the fact that, in order to change variables from $z$ to $w$, the curve $\Gamma$ is traversed \emph{twice} as $w$ traverses the unit circle. We can now combine the various bounds to obtain 
\[
k_n^{-2} \geq \frac{c \pi}{L} c^{2n} = \frac{c}{2y} c^{2n}.
\]


The proof of the upper bound is somewhat trickier and does not follow the standard asymptotic arguments of Szeg\H{o} \cite{szego1975orthpoly, grenander1958toeplitz} and Widom \cite{widom1969extremalpoly}.
We use Lemma \ref{teo:extremal}, and invoke the classical fact that the so-called Faber polynomial $\Phi_n(z)$ is an adequate substitute for $p_n(z)$, with comparable oscillation and size properties. In this context, we define $\Phi_n(z)$ as the polynomial part of the Laurent expansion at infinity of the function $(\Phi(z))^n$. From (\ref{eq:Laurent-Phiz}), we observe that
\[
\Phi_n(z) = c^{-n} z^n + \mbox{ lower-order terms}.
\]
The monic version of $\Phi_n(z)$, for use in place of the minimizer in (\ref{eq:extremal}), is $f_n(z) = c^{n} \Phi_n(z)$. We now make use of a relatively recent inequality due to Ellacott \cite{Ellacott}, (which in turn owes much to a characterization of $\Phi_n(z)$ due to Pommerenke \cite{Pommerenke}), 
\[
\max_{z \in \Gamma} | \Phi_n(z) | \leq \frac{V}{\pi},
\]
where $V$ is the so-called total rotation of $\Gamma$, defined as the total change in angle as one traverses the curve, with positive increments regardless of whether the rotation occurs clockwise or counter-clockwise. In other words, if $\theta(z)$ is the angle that the tangent to $\Gamma$ at $z$ makes with the horizontal, then $V$ is the total variation of $\theta(z)$, or
\[
V = \int_{\Gamma} | d \theta(z) |.
\]
In the case of a circle arc of opening angle $2 \pi y$, it is easy to see that $V = 2\pi (1 + 2 y)$. 

We conclude with the sequence of bounds
\begin{align*}
k_n^{-2} &\leq \| f_n \|^2 \\
&= c^{2n} \frac{1}{L} \int_{\Gamma} |\Phi_n(z)|^2 |dz| \\
&\leq c^{2n} \frac{V^2}{\pi^2} \\
&\leq 4 (1+2y)^2 c^{2n}.
\end{align*}

\begin{remark}
The exact asymptotic rate for $k_n^{-2}$ as $n \to \infty$ can be inferred from the work of Widom \cite{widom1969extremalpoly} in the same fashion as above; it is
\[
k_n^{-2} \sim \frac{c}{y} c^{2n}.
\]
However, favorable inequalities for small $n$ are not readily available from those arguments. The reason for the factor 2 discrepancy between the lower bound in Theorem \ref{teo:kn-main} and the asymptotic rate can be traced to the fact that $\Gamma$ is a Jordan arc (with empty interior), not a Jordan curve. It is for the same reason that the asymptotic expression (\ref{eq:gnz}) differs from that given by Szeg\H{o} in \cite{szego1975orthpoly}, p. 372, by a factor $1/\sqrt{2}$.

\end{remark}

\end{proof}

\subsection{Upper bound on the smallest singular value}

Let
\[
A_n = \mbox{span} \{ a_{j}(\theta) ; 0 \leq j \leq n \},
\]
and $P_n$ be the orthoprojector on $A_n$. Subspace angles allow to formulate upper bounds on eigenvalues of $G$. Specifically, recall that $\| a_n \| = 1$, and consider
\[
\sin \angle (a_{n}, A_{n-1}) = d(a_{n}, A_{n-1}) = \| a_n - P_{n-1} a_n \|.
\]
The norms are in $L^2(-\pi y, \pi y)$.

\begin{lemma}\label{teo:smin-angle}
Consider a matrix $[ A \; b ]$ with columns normalized to unit norm. Then its smallest singular value obeys
\[
s_{\min}([ A \; b ]) \leq  \sin \angle (b, \mbox{Ran} \, A).
\]
\end{lemma}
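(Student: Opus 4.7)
The plan is to exhibit one explicit unit vector $x$ that witnesses the bound, using the variational characterization $s_{\min}([A\ b]) = \min_{\|x\|=1} \| [A\ b] x \|$.

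First, let $P$ be the orthogonal projector onto $\mathrm{Ran}\, A$ and pick any coefficient vector $y$ such that $Ay = Pb$ (when $A$ has a nontrivial kernel, any preimage will do; the argument does not depend on the choice). Since $\|b\|=1$, the definition of the angle between a vector and a subspace gives
\[
\sin \angle(b, \mathrm{Ran}\, A) = \| b - Pb \| = \| b - Ay \|.
\]

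Next, I would form the candidate vector
\[
x = \frac{1}{\sqrt{1 + \|y\|^2}} \begin{pmatrix} -y \\ 1 \end{pmatrix},
\]
which is plainly a unit vector. A direct computation yields
\[
[A\ b] x = \frac{1}{\sqrt{1+\|y\|^2}} (b - Ay) = \frac{1}{\sqrt{1+\|y\|^2}} (b - Pb),
\]
whose norm equals $\| b - Pb \| / \sqrt{1+\|y\|^2} \leq \| b - Pb \| = \sin \angle(b, \mathrm{Ran}\, A)$.

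Combining these, $s_{\min}([A\ b]) \leq \| [A\ b] x \| \leq \sin \angle(b, \mathrm{Ran}\, A)$, which is the claim. There is no real obstacle here; the only mild subtlety is to make sure the formula remains valid when $A$ is rank-deficient, but since we only need \emph{some} $y$ with $Ay = Pb$ to build the candidate $x$, the argument goes through unchanged.
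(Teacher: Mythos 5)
Your argument is correct, and it takes a different (if closely related) route from the paper. The paper's proof invokes the Eckart--Young theorem: it replaces $[A\;b]$ by $[A\;P_A b]$, a matrix of deficient column rank, and bounds $s_{\min}$ by the spectral-norm distance $\|[A\;b]-[A\;P_A b]\|_2=\|b-P_A b\|$. You instead use the variational characterization $s_{\min}([A\;b])=\min_{\|x\|=1}\|[A\;b]x\|$ and exhibit the explicit unit test vector $x=(1+\|y\|^2)^{-1/2}\begin{pmatrix}-y\\ 1\end{pmatrix}$ with $Ay=P_A b$, which gives $\|[A\;b]x\|=\|b-P_A b\|/\sqrt{1+\|y\|^2}\leq\sin\angle(b,\mbox{Ran}\,A)$. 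Your version is more elementary (no low-rank approximation theorem needed) and in fact yields the slightly sharper bound with the factor $(1+\|y\|^2)^{-1/2}$, e.g.\ with $y$ the minimum-norm preimage; your handling of rank-deficient $A$ is also fine, since any preimage of $P_A b$ works. The one hypothesis to keep in mind is that the identity $s_{\min}(M)=\min_{\|x\|=1}\|Mx\|$ requires $M$ to have at least as many rows as columns (otherwise that minimum is $0$ while the smallest singular value need not be); in the paper's setting the row index is the continuous variable $\theta$, so this is automatic, and the same tall-matrix assumption is implicitly present in the paper's rank-$(n-1)$ argument as well.
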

\begin{proof}
Denote by $P_A$ the orthoprojector onto $\mbox{Ran} \, A$. In the matrix spectral norm $\| \cdot \|_2$,
\begin{align*}
s_{\min}([ A \; b ]) &\leq \| [ A \; b ] - [ A \; P_A b ] \|_2 \qquad \mbox{(SVD gives the best rank $(n-1)$ approximation)} \\
&= \| b - P_A b \| \\
&= \sin \angle (b, \mbox{Ran} \, A)  \qquad \mbox{(because $\| b \| = 1$)}
\end{align*}
\end{proof}

The change of variables $z = e^{i \theta}$ reveals that
\[
\| a_n - P_{n-1} a_n \| = \| z^n - P_{n-1} z^n \| = d(z^n, P_{n-1}) ,
\]
where $P_n$ is overloaded to mean the orthoprojector onto span $\{ 1, z, \ldots, z^{n} \}$; where the first norm is in $L^2(- \pi y, \pi y)$; and where the second norm is given by equation (\ref{eq:IP-Gamma}). 

It is then well-known that $d(z^n, P_{n-1})$ is accessible from the coefficient $k_n$ introduced earlier in Section \ref{subsec:preliminaries}.

\begin{lemma}\label{teo:1overkn}
\[
d(z^n, P_{n-1}) = \frac{1}{k_n},
\]
where $p_n(z) = k_n z^n + $ lower-order terms is the orthogonal polynomial introduced in Section \ref{subsec:preliminaries}. 
\end{lemma}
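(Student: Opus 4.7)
The plan is to recognize the distance $d(z^n, P_{n-1})$ as exactly the extremal quantity governed by Lemma \ref{teo:extremal}, and then read off the answer.

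First I would unfold the definition: since $P_{n-1}$ is the orthoprojector onto $\mathrm{span}\{1, z, \ldots, z^{n-1}\}$, the quantity $d(z^n, P_{n-1}) = \|z^n - P_{n-1} z^n\|$ equals
\[
\min_{q \in \mathrm{span}\{1, z, \ldots, z^{n-1}\}} \| z^n - q \|,
\]
which is the same as the minimum of $\|\pi\|$ over all monic polynomials $\pi(z) = z^n - q(z)$ of degree exactly $n$, i.e., $\min_{\pi \in M_n} \|\pi\|$.

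Next I would invoke the Christoffel variational principle (Lemma \ref{teo:extremal}), which states that this minimum equals $1/k_n$, with unique minimizer $k_n^{-1} p_n(z)$. Combining the two displays yields $d(z^n, P_{n-1}) = 1/k_n$, which is the claim.

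There is no real obstacle here: the lemma is essentially a bookkeeping statement translating the geometric distance to the span of lower monomials into the algebraic extremal problem that defines the leading coefficient of the orthonormal Szeg\H{o} polynomial. The only care needed is to confirm that both norms are taken with respect to the same inner product (\ref{eq:IP-Gamma}) on $\Gamma$, which is the case by construction.
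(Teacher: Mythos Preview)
Your proposal is correct and essentially matches the paper's argument. The only cosmetic difference is that the paper reads off $1/k_n = \|z^n - P_{n-1}z^n\|$ directly from the Gram--Schmidt formula $p_n = (z^n - P_{n-1}z^n)/\|z^n - P_{n-1}z^n\|$ rather than routing through Lemma~\ref{teo:extremal}, but the content is identical.
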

\begin{proof}
The Gram-Schmidt orthogonalization procedure yields
\[
p_n(z) = \frac{z^n - P_{n-1} z^n}{\| z^n - P_{n-1} z^n \|},
\]
which takes the form
\[
p_n(z) = k_n z^n + q_{n-1}(z), \qquad q_{n-1} \in P_{n-1},
\]
with $1/k_n = \| z^n - P_{n-1} z^n \| = d(z^n, P_{n-1})$. 
\end{proof}

We can now combine Lemmas \ref{teo:smin-angle} and \ref{teo:1overkn} with Theorem \ref{teo:kn-main}, and $y < 1/2$, to conclude that the least singular value of $a_j(\theta)$, with $\theta \in [-\pi y, \pi y]$ and $0 \leq j \leq n$, is upper-bounded by
\beq\label{eq:upper-bound-least-sv}
k_n^{-1} \leq 4 c^n, \qquad c = \sin(\pi y / 2).
\eeq

%
%
%

\subsection{Lower bound on the smallest singular value}

Recall that $\sigma_{\min}(A_T) = \sqrt{\lambda_{\min} (G)}$, where $G$ is the Gram matrix
\[
G_{j_1 j_2} = \< z^{j_1}, z^{j_2} \> = \frac{1}{L} \int_{\Gamma} z^{j_1} \overline{z}^{j_2} | dz |, \qquad 0 \leq j_1, j_2 \leq n.
\]

We make use of the following characterization of the eigenvalues of $G$, which according to Berg and Szwarc \cite{Berg}, was first discovered by Aitken \cite{Collar}. It was also used in the work of Szeg\H{o} \cite{szego1975orthpoly}, and that of Widom and Wilf \cite{widom1966eigHankel}.

\begin{lemma} (Aitken)
\[
\frac{1}{\lambda_{\min}(G)} = \max_P \frac{1}{2 \pi} \int_0^{2 \pi} | P(e^{i \theta}) |^2 d\theta,
\]
where the maximum is over all degree-$n$ polynomials $P(z)$ such that $\| P \| = 1$.
\end{lemma}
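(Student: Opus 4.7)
The plan is to recognize Aitken's identity as a disguised Rayleigh--Ritz quotient, after expanding both sides in the monomial basis.

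First I would parametrize: any degree-$n$ polynomial can be written as $P(z) = \sum_{j=0}^{n} c_j z^j$, and I let $v = (c_0, c_1, \ldots, c_n)^T \in \C^{n+1}$. Two norms attach to such a $P$. On the arc $\Gamma$, bilinearity in the inner product (\ref{eq:IP-Gamma}) together with the definition of $G$ gives
\[
\| P \|^2 = \frac{1}{L} \int_{\Gamma} |P(z)|^2 \, |dz| = \sum_{j_1, j_2 = 0}^{n} c_{j_1} \overline{c_{j_2}} \, G_{j_1 j_2} = v^* G v,
\]
where I have used that $G$ is Hermitian. On the full unit circle, on the other hand, the monomials $\{ z^j \}$ are \emph{orthonormal} in $L^2(0, 2\pi)$, so
\[
\frac{1}{2\pi} \int_0^{2 \pi} |P(e^{i\theta})|^2 \, d\theta = \sum_{j=0}^{n} |c_j|^2 = v^* v.
\]

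With those two identifications, the right-hand side of the claim becomes
\[
\max_{v \in \C^{n+1}, \; v^* G v = 1} v^* v = \max_{v \neq 0} \frac{v^* v}{v^* G v}.
\]
This is the reciprocal of the minimum Rayleigh quotient of $G$. Since $G$ is positive definite (the monomials $1, z, \ldots, z^n$ are linearly independent on $\Gamma$, which has positive arclength), one has $\max_{v \neq 0} v^* v / (v^* G v) = 1 / \lambda_{\min}(G)$ by the Courant--Fischer principle; equivalently, the substitution $w = G^{1/2} v$ turns the quotient into $w^* G^{-1} w / w^* w$, whose maximum is $\lambda_{\max}(G^{-1}) = 1/\lambda_{\min}(G)$.

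There is no real obstacle: the argument is essentially a one-line application of Rayleigh--Ritz. The only point that needs a small amount of care is the bookkeeping in relating the quadratic form $\sum c_{j_1} \overline{c_{j_2}} G_{j_1 j_2}$ to $v^* G v$ via the Hermitian symmetry $G_{j_1 j_2} = \overline{G_{j_2 j_1}}$, and noting that monomials are orthonormal on the full circle (weight $1/2\pi$) but only \emph{linearly independent} on the arc, which is exactly what makes $G$ positive definite and the maximum on the right-hand side finite.
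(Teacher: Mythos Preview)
Your proof is correct and is essentially the same as the paper's: both write $P(z)=\sum_j c_j z^j$, identify $\|P\|^2 = \mathbf{c}^* G \mathbf{c}$ on the arc and $\frac{1}{2\pi}\int_0^{2\pi}|P(e^{i\theta})|^2\,d\theta = \mathbf{c}^*\mathbf{c}$ on the full circle by orthonormality of the monomials, and then invoke the Rayleigh--Ritz characterization of $\lambda_{\min}(G)$.
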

\begin{proof}
The variational characterization of $\lambda_{\min}(G)$ gives
\[
\lambda_{\min}(G) = \min_{\mathbf{c}} \frac{\mathbf{c}^* G \mathbf{c}} {\mathbf{c}^* \mathbf{c}} = \min_{P} \frac{\| P \|^2}{\mathbf{c}^* \mathbf{c}},
\]
where $\mathbf{c} = (c_0, \ldots, c_n)^T$, and the last min is over $P$ of the form $P(z) = \sum_{k = 0}^n c_k z^k$. For such a $P$, we can apply orthogonality of $z^n$ on the unit circle to obtain
\[
\frac{1}{2 \pi} \int_0^{2 \pi} | P(e^{i \theta}) |^2 d\theta = \mathbf{c}^* \mathbf{c}.
\]
\end{proof}


A useful bound for the growth of any such $P(z)$ away from $\Gamma$ can be obtained from the fact that the Szeg\H{o} kernel reproduces bounded analytic functions outside of $\Gamma$. The following argument was used in \cite{widom1969extremalpoly}.

\begin{lemma}
Let $P(z)$ be a polynomial of degree $n$ such that $\| P \|^2 = \frac{1}{L} \int_\Gamma |P(z)|^2 |dz| = 1$. Then
\[
|P(z)| \leq K(z,z)^{1/2} |\Phi(z)|^n.
\]
\end{lemma}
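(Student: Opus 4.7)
The plan is to reduce the claim to the reproducing property of the Szegő kernel by dividing $P$ by $\Phi^n$ to obtain an $H^2(\Omega)$ function of the same modulus on $\Gamma$, then applying Cauchy--Schwarz.

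First, set $F(z) = P(z)/\Phi(z)^n$. I would verify that $F \in H^2(\Omega)$. Analyticity in $\Omega$ follows from the fact that $\Phi$ maps $\Omega$ to $\{|w|>1\}$ and hence never vanishes there. Boundedness at infinity follows from matching leading orders: by (\ref{eq:Laurent-Phiz}), $\Phi(z)^n \sim c^{-n} z^n$ as $z \to \infty$, and $P$ has degree $n$, so $F$ has a finite limit at $\infty$. Finally, on $\Gamma$ we have $|\Phi(z)|=1$, so $|F(z)| = |P(z)|$ pointwise, giving $\|F\|_{L^2(\Gamma,|dz|)} = \|P\|_{L^2(\Gamma,|dz|)}$, and in particular $\|F\|^2 = 1$ in the inner product (\ref{eq:IP-Gamma}).

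Next I would apply the reproducing formula (\ref{eq:reprod}) to $F$ at an arbitrary point $z \in \Omega$, and invoke the Cauchy--Schwarz inequality with respect to the inner product on $\Gamma$:
\[
|F(z)|^2 = \left| \frac{1}{L} \int_\Gamma F(\zeta) \overline{K(\zeta,z)}\, |d\zeta| \right|^2 \leq \|F\|^2 \cdot \frac{1}{L} \int_\Gamma |K(\zeta,z)|^2 |d\zeta|.
\]
The second factor is exactly $\langle K(\cdot,z), K(\cdot,z)\rangle$, and by the reproducing property applied to $K(\cdot,z) \in H^2(\Omega)$ itself this equals $K(z,z)$. Combined with $\|F\|^2 = 1$, this gives $|F(z)| \leq K(z,z)^{1/2}$, and multiplying by $|\Phi(z)|^n$ yields the claim.

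The only point requiring any real care is membership of $F$ in $H^2(\Omega)$, in particular the behavior at infinity; this is handled by the leading-order matching above. Everything else is a direct application of the reproducing kernel inequality, and no new estimates on $K$ or $\Phi$ are needed at this stage.
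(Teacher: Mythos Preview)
Your proof is correct and follows essentially the same route as the paper: define $F = P/\Phi^n$, check $F \in H^2(\Omega)$ via analyticity and the leading-order match at infinity, then apply the reproducing formula plus Cauchy--Schwarz together with $\langle K(\cdot,z),K(\cdot,z)\rangle = K(z,z)$. If anything, you spell out the $H^2(\Omega)$ membership a bit more carefully than the paper does.
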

\begin{proof}
Let $F(z) = \Phi(z)^{-n} P(z)$. This analytic function obeys $\| F \|^2 = 1$, and is bounded at $z = \infty$, hence belongs to the Hardy space $H^2(\Omega)$. By Definition \ref{def:szego-kernel},
\[
F(z) = \frac{1}{L} \int_{\Gamma} \overline{K(z, \zeta)} \, F(\zeta)  |d\zeta|.
\]
By Cauchy-Schwarz, we get
\[
| F(z) | \leq \left( \frac{1}{L} \int_{\Gamma} | K(z,\zeta) |^2 |d\zeta| \right)^{1/2} \, \| F \|.
\]
The Szego kernel is itself in $H^{2}(\Omega)$ as a function of its left argument, hence
\[
K(z,z') = \frac{1}{L} \int_{\Gamma} K(\zeta, z') \overline{K(\zeta, z)} | dz |.
\]
By letting $z = z'$, we get $\frac{1}{L} \int_{\Gamma} | K(\zeta,z) |^2 |d\zeta| = K(z,z)$, hence
\[
| F(z) | \leq K(z,z)^{1/2},
\]
and
\[
|P(z)| = |F(z)| \, |\Phi(z)|^n \leq K(z,z)^{1/2} |\Phi(z)|^n.
\]
\end{proof}

An application of Proposition \ref{teo:Kzzeta} yields an upper bound for $P(z)$ as
\beq\label{eq:bound_P}
|P(z)|^2 \leq \frac{L}{\pi} |\Phi'(z)| \frac{|\Phi(z)|^2}{|\Phi(z)|^2 - 1} |\Phi(z)|^{2n}, \qquad z \in \Omega.
\eeq
It is a good match, up to a factor $\sqrt{2}$, with the absolute value of the asymptotic approximation (\ref{eq:gnz}) for $p_n(z)$ at $z = \infty$. (This $\sqrt{2}$ factor can again be traced to the fact that $\Gamma$ is a Jordan arc traversed once, not a Jordan curve. It is unclear to us that it can be removed in the context of non-asymptotic bounds.) 

However, near $\Gamma$ where $|\Phi(z)| = 1$, the bound is very loose. To formulate a better bound, consider the banana-shaped region bounded by $\Gamma_2 = \{ z : | \Phi(z) | = 2 \}$. 

\begin{lemma} Let $P(z)$ be a polynomial of degree $n$ such that $\| P \|^2 = \frac{1}{L} \int_\Gamma |P(z)|^2 |dz| = 1$. For all $z$ in the interior of $\Gamma_2$,
\beq\label{eq:bound_P_gamma2}
|P(z)|^2 \leq \frac{4 L}{\pi} \, \frac{1}{c \sqrt{1-c^2}} \, 2^{2n}.
\eeq
\end{lemma}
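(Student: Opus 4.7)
The plan is to combine the pointwise bound (\ref{eq:bound_P}) on the curve $\Gamma_2$ with the maximum modulus principle. Because $\Gamma$ is a Jordan arc, the conformal map $\phi$ sends the circle $\{|w|=2\}$ homeomorphically onto a Jordan curve $\Gamma_2$ whose bounded (simply connected) interior contains $\Gamma$. Since $P$ is a polynomial, it is entire, so the max of $|P|$ over the interior of $\Gamma_2$ is attained on $\Gamma_2$ itself, and it suffices to establish the claimed inequality for $z\in \Gamma_2$.

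On $\Gamma_2$ we have $|\Phi(z)| = 2$ by definition, so the factors $|\Phi(z)|^2/(|\Phi(z)|^2 - 1)$ and $|\Phi(z)|^{2n}$ in (\ref{eq:bound_P}) are simply $4/3$ and $2^{2n}$. The inequality reduces to
\[
|P(z)|^2 \leq \frac{L}{\pi} |\Phi'(z)| \cdot \frac{4}{3} \cdot 2^{2n}, \qquad z \in \Gamma_2,
\]
and matching against the target makes it clear that the real task is to show $|\Phi'(z)| \leq 3/(c\sqrt{1-c^2})$ on $\Gamma_2$.

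For this I would write $z = \phi(w)$ with $|w|=2$, so $|\Phi'(z)| = 1/|\phi'(w)|$, and seek a lower bound on $|\phi'(w)|$. Differentiating (\ref{eq:phiw}) gives
\[
\phi'(w) = c \cdot \frac{w^2 + 2cw + 1}{(w+c)^2},
\]
whose denominator satisfies $|w+c|^2 \leq (2+c)^2 \leq 9$ trivially. For the numerator, with $w = 2e^{i\alpha}$ and $x = \cos\alpha$, a direct expansion produces the sum-of-squares identity
\[
|w^2 + 2cw + 1|^2 = (4x + 5c)^2 + 9(1-c^2) \geq 9(1-c^2).
\]
Combining yields $|\phi'(w)| \geq c\sqrt{1-c^2}/3$, hence $|\Phi'(z)| \leq 3/(c\sqrt{1-c^2})$, and reassembling the constants produces exactly the claimed bound $|P(z)|^2 \leq \frac{4L}{\pi c\sqrt{1-c^2}} 2^{2n}$.

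The main obstacle is isolating this sum-of-squares identity to obtain the precise factor $\sqrt{1-c^2}$. A naive triangle-inequality bound of the form $|w^2+2cw+1| \geq |w|^2 - 2c|w| - 1 = 3 - 4c$ would give a strictly weaker constant and miss the correct behavior as $c \to 1$; geometrically, what forces $\sqrt{1-c^2}$ is that the two zeros of the numerator lie on the unit circle with imaginary parts $\pm \sqrt{1-c^2}$, and the true minimum of the product of their distances to $|w|=2$ turns out to be $3\sqrt{1-c^2}$ rather than $3-4c$.
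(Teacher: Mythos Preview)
Your proof is correct and follows the same route as the paper: apply the maximum modulus principle inside $\Gamma_2$, invoke (\ref{eq:bound_P}) on $\Gamma_2$ where $|\Phi(z)|=2$, and bound $|\Phi'(z)|$ by passing to $w=\Phi(z)$ and estimating $|\phi'(w)|$ from below. The paper states the inequality $|\Phi'(z)|\le \frac{(|\Phi(z)|+c)^2}{c\sqrt{1-c^2}\,(|\Phi(z)|^2-1)}$ as ``elementary but tedious'' without details; your sum-of-squares identity $|w^2+2cw+1|^2=(4\cos\alpha+5c)^2+9(1-c^2)$ is a clean way to supply exactly that missing step and lands on the same constant.
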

\begin{proof}
Since $P(z)$ is analytic, we can apply the maximum modulus principle inside $\Gamma_2$. In order to use the bound (\ref{eq:bound_P}) on $\Gamma_2$, we need an upper bound on $|\Phi'(z)|$. By passing to the $w = \Phi(z)$ variable via (\ref{eq:phiw}) and $\phi'(\Phi(z)) = 1/\Phi'(z)$, it is elementary but tedious to show that
\[
|\Phi'(z)| \leq \frac{1}{c \sqrt{1-c^2}} \frac{(|\Phi(z)| + c)^2}{|\Phi(z)|^2 - 1}.
\]
When $|\Phi(z)| = 2$, the bounds combine to give (\ref{eq:bound_P_gamma2}).
\end{proof}

We are now left with the task of bounding $\int_{|z|=1} |P(z)|^2 \, |dz|$ from equations (\ref{eq:bound_P}) and (\ref{eq:bound_P_gamma2}). Call $R_1$ the region defined by $|z| = 1$ and $|\Phi(z)| > 2$, while $R_2$ corresponds to $|z| = 1$ and $|\Phi(z)| \leq 2$
\begin{itemize}
\item For $R_1$, it is advantageous to pass to the $w = \Phi(z)$ variable via (\ref{eq:phiw}). The pre-image of the arc of $|z|=1$ limited by $|\Phi(z)| > 2$ is the arc of the circle $C$ of equation
\[
| w + \frac{1}{c} |^2 = \frac{1-c^2}{c^2},
\]
limited by $|w| > 2$. Using (\ref{eq:bound_P}), the two Jacobians cancel out and we get 
\[
\int_{R_1} |P(z)|^2 \, |dz| \leq \frac{L}{\pi} \int_{C} \frac{|w|^2}{|w|^2 - 1} |w|^{2n} |dw|.
\]
Parametrize $C$ using $w = - \frac{1}{c} - \frac{\sqrt{1-c^2}}{c} e^{i \theta}$ with $\theta \in [- \pi, \pi)$, so that the maximum of the integrand occurs when $\theta = 0$. The measure becomes $|dw| = \frac{\sqrt{1-c^2}}{c} d\theta$, and since $|w| > 2$, we have
\[
\int_{R_1} |P(z)|^2 \, |dz| \leq \frac{4L}{3\pi} \, c^{-2n-1} \, (1-c^2)^{1/2} \, \int_{-\pi}^{\pi} (2 - c^2 + 2 \sqrt{1-c^2} \cos \theta)^n \, d\theta.
\]
The integrand is handled using the bound $a + b \cos \theta \leq(a+b) e^{- \frac{\theta^2}{2(\frac{a}{b}+1)}}$ (valid for $\theta \in [-\pi, \pi)$ as long as $c < .85$). Let $\sigma = \frac{2-c^2}{2\sqrt{1-c^2}} + 1$, so that
\begin{align*}
\int_{R_1} |P(z)|^2 \, |dz| &\leq \frac{4L}{3 \pi} \, c^{-2n-1} \, (2(2-c^2))^{n} (1-c^2)^{1/2} \int_{-\pi}^{\pi} e^{- n \frac{\theta^2}{2 \sigma^2}} \, d\theta. \\
&\leq \frac{4L}{3 \sqrt{2 \pi n}} \, \left( \frac{4-2 c^2}{c^2} \right)^{n+\frac{1}{2}}.
\end{align*}
To get the last line, we have used the fact that $(1-c^2)^{1/2} \sigma \leq (2-c^2)^{1/2}$.

\item The contribution along $R_2$ is of a different order of magnitude. The endpoints $z_{\pm}$ of the corresponding arc of the unit circle can be obtained from $|\Phi(z)| = 2$ and $|z| = 1$, which reveals that $z_{\pm} = \phi(w_{\pm})$ with $w_{\pm} = 2 e^{\pm i \alpha}$ and $\cos \alpha = -5c/4$. Further elementary calculations (using $\mbox{acos}(x) \leq \frac{\pi}{2} \sqrt{1-x}$ for $x > 0$) show that the arc length between $z_{+}$ and $z_{-}$ is bounded by $2 \sqrt{2} \pi c$. Hence (\ref{eq:bound_P_gamma2}) implies that
\[
\int_{R_2} |P(z)|^2 \, |dz| \leq 8 \sqrt{2} L \, \frac{1}{\sqrt{1-c^2}} \, 2^{2n}.
\]

\end{itemize}

The upper bound on $\int_{|z|=1} |P(z)|^2 |dz|$ is then the sum of the contributions along $R_1$ and $R_2$. The former contribution always dominates the latter (up to multiplicative constants, either in the limit $c \to 0$ or $n \to \infty$), because our assumption that $y < \frac{1}{2}$ implies $c \leq \sqrt{2} / {2}$, and in turn $\left( \frac{4-2c^2}{c^2} \right)^n \geq 6^n \geq 4^n$. 
A bit of grooming results in the bound
\[
\lambda^{-1}_{\min}(G) \leq C \, \left( \frac{4-2c^2}{c^2} \right)^n, 
\]
where $C > 0$ is a reasonable numerical constant. An even shorter statement is $\lambda_{\min}(G) \geq C' \left( \frac{c}{4} \right)^{2n}.$

For the least singular value of $a_k(x)$, we get the lower bound
\beq\label{teo:lower-bound-least-sv}
C \, \left( \frac{c}{4} \right)^{n}, \qquad c = \sin(\pi y / 2).
\eeq

%
%
%
%
%
%

\section{Non-consecutive atoms}\label{sec:non-contiguous}

We now prove Lemma \ref{teo:eps2}.

Let $T$ for a set of $n+1$ integers that we denote $\tau_j$ (the fact that they are integers has no bearing on the forthcoming argument.) Let
\[
(A_T)_{\theta, j} = \frac{e^{i \tau_j \theta}}{\sqrt{2 \pi y}}, \qquad \theta \in [ -\pi y, \pi y ], \qquad \tau_j \in T.
\]
The Gram matrix $G_{j_1, j_2} = \int_{-\pi y}^{\pi y} (A_T)_{\theta, j_1} \overline{(A_T)_{\theta, j_2}} \, d\theta$ is invariant under translation of the $\tau_j$, hence so are its eigenvalues. We may therefore view $G$'s eigenvalues as functions of the differences $\tau_{j_1, j_2} = \tau_{j_2} - \tau_{j_1}$. Recall that $\lambda_{\min}(G) = \sigma^2_{\min}(A_T)$. 

\begin{definition}
We say that a function $f( \{ \tau_{j_1, j_2} \} )$ of some arguments $\tau_{j_1, j_2}$, for $0 \leq j_1 < j_2 \leq n$, is increasing if 
\[
f( \{ \tau'_{j_1, j_2} \} ) \geq f( \{ \tau_{j_1, j_2} \} ),
\]
provided $\tau'_{j_1, j_2} \geq \tau_{j_1, j_2}$ for all $0 \leq j_1 < j_2 \leq n$. Furthermore, $f$ is strictly increasing if
\[
f( \{ \tau'_{j_1, j_2} \} ) > f( \{ \tau_{j_1, j_2} \} ),
\]
provided at least one of the inequalities $\tau'_{j_1, j_2} \geq \tau_{j_1, j_2}$ is strict.
\end{definition}

\begin{theorem} (Monotonicity of $\lambda_{\min}(G)$ in $T$.) Fix $n+1$, the cardinality of $T$. There exists $y^* > 0$, such that for all $0 < y \leq y^*$, the eigenvalue $\lambda_{\min}(G)$ is an increasing function of the phase differences $\tau_{j_2} - \tau_{j_1}$. 
\end{theorem}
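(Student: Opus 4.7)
The plan is to carry out a Taylor expansion in $y$ of the quadratic form $c^T G c$, identify the leading behavior of $\lambda_{\min}(G)$ as $y\to 0$, prove strict monotonicity of this leading term in each consecutive gap $\delta_k=\tau_k-\tau_{k-1}$, and then argue that the sub-leading corrections cannot overturn this once $y$ is small enough. By translation invariance of the eigenvalues, I work in the parametrization $(\delta_1,\ldots,\delta_n)$; monotonicity in all pairwise differences $\tau_{j_2}-\tau_{j_1}$ is equivalent to monotonicity in each $\delta_k$ separately, since any such difference is a sum of consecutive gaps.

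Writing $\mbox{sinc}(\alpha) = \tfrac{1}{2}\int_{-1}^{1}e^{i\alpha s}\,ds$ gives
\[
c^T G c = \frac{1}{2}\int_{-1}^{1}\Bigl|\sum_{j}c_j e^{i\pi y \tau_j s}\Bigr|^2 ds,
\]
and expanding the exponential in $y$ produces a power series whose $y^{2k}$-coefficient is a polynomial in the moments $M_r(c):=\sum_j c_j\tau_j^r$. The crucial observation is that when $M_0=M_1=\cdots=M_{n-1}=0$, every term with $k<n$ vanishes identically, leaving
\[
c^T G c = \frac{(\pi y)^{2n}}{(n!)^2(2n+1)}\,M_n(c)^2 + O(y^{2n+2}).
\]
These $n$ vanishing-moment conditions cut out a one-dimensional subspace of $\R^{n+1}$ spanned by the divided-difference vector $c^*_j = 1/\bigl(\sqrt S \,\prod_{\ell\neq j}(\tau_j-\tau_\ell)\bigr)$ with $S := \sum_j \prod_{\ell\neq j}(\tau_j-\tau_\ell)^{-2}$; the normalization $\|c^*\|=1$ follows from the definition of $S$, and since the $n$-th divided difference of $\tau^n$ equals $1$ one has $M_n(c^*)^2=1/S$. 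Any unit $c$ with some nonzero $M_{r^*}(c)$ for $r^*<n$ contributes a term of order $y^{2r^*}\gg y^{2n}$ to $c^T G c$, so such $c$ cannot be the minimizer for small $y$. Therefore
\[
\lambda_{\min}(G) = \frac{(\pi y)^{2n}}{(n!)^2(2n+1)\,S} + O(y^{2n+2}).
\]

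Next I verify that $S$ is strictly decreasing in each $\delta_k$. Increasing $\delta_k$ shifts $\tau_k,\ldots,\tau_n$ uniformly to the right, so $|\tau_j-\tau_\ell|$ strictly increases for $j<k\leq \ell$ and is unchanged otherwise. Fix any $j$; the product $\prod_{\ell\neq j}(\tau_j-\tau_\ell)^2$ has at least one factor strictly growing (any $\ell$ on the opposite side of the gap $k$ from $j$) and no factor shrinking, so its reciprocal strictly decreases. Summing over $j$, $S$ strictly decreases in $\delta_k$, and hence the leading-order expression for $\lambda_{\min}(G)$ strictly increases in $\delta_k$.

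The main obstacle is extending leading-order monotonicity to the full statement uniformly over configurations. Over any bounded range of the $\delta_k$'s, $\partial_{\delta_k}(y^{2n}/S)$ is bounded below by a positive constant, while the derivative of the $O(y^{2n+2})$ correction remains of order $y^{2n+2}$, so the total derivative has the correct sign once $y$ is sufficiently small. For unboundedly spread configurations the Taylor expansion breaks down, but then the off-diagonal entries of $G$ are small, $\lambda_{\min}(G)$ is close to $1$, and the configuration is far from minimizing; making this quantitative (for instance by showing that once $y|\tau_j-\tau_\ell|$ is large for some pair, $\lambda_{\min}(G)$ already exceeds the value at $T=\{0,\ldots,n\}$) lets one choose $y^*$ uniformly over all configurations relevant to Lemma~\ref{teo:eps2}.
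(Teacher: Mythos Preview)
Your overall plan coincides with the paper's: Taylor expand $c^T G c$ in $y$, pass to moments, identify the divided-difference direction $c^*_j\propto\prod_{\ell\ne j}(\tau_j-\tau_\ell)^{-1}$, and study the leading $y^{2n}$ coefficient as a function of the gaps. The gap is in the sentence ``any unit $c$ with some nonzero $M_{r^*}(c)$ for $r^*<n$ contributes a term of order $y^{2r^*}\gg y^{2n}$, so such $c$ cannot be the minimizer.'' This rules out \emph{fixed} competitors, but the true minimizer $c(y)$ drifts with $y$: it has $M_r(c(y))\to 0$ for $r<n$ at precisely the rate that makes the low-order moment contributions of the \emph{same} order $y^{2n}$ as the $M_n^2$ term, and those cross-terms change the leading coefficient. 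Concretely, for $n=2$, $\tau=(0,1,2)$ your formula gives $(c^*)^TGc^*\sim(\pi y)^4/30$, whereas a direct diagonalization of the $3\times3$ Gram matrix gives $\lambda_{\min}(G)\sim 2(\pi y)^4/135$, strictly smaller. So the displayed identity for $\lambda_{\min}(G)$ is only an upper bound, and your monotonicity argument for $S=\sum_j\prod_{\ell\ne j}(\tau_j-\tau_\ell)^{-2}$ establishes monotonicity of that upper bound, not of $\lambda_{\min}(G)$.

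What actually governs the leading coefficient is the full quadratic form in $(M_0,\ldots,M_n)$ coming from the Hilbert-type matrix $H_{r_1 r_2}=\tfrac{1}{r_1+r_2+1}$, coupled to the constraint $\|c\|=1$ via the Vandermonde inverse. The paper makes this explicit by rewriting $\lambda_{\min}(G)$ as the smallest generalized eigenvalue of a pencil $H_n-\mu B_y$; in the limit $y\to 0$ the matrix $B_y$ degenerates to the rank-one $c_n\,mm^*$ with $m=c^*$ (unnormalized), and the single finite eigenvalue is a function of $m$ through $H_n^{-1}$, not through $m^*m$ alone. Monotonicity in the gaps then has to be argued at the level of this generalized eigenvalue (the paper does it by a gradient/path argument in $m$), and this step is genuinely missing from your write-up. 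Your final paragraph on uniformity in the configuration is also only a sketch, but that is secondary to the issue above.
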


The theorem shows that, as long as $\tau_j$ are integers, the minimum eigenvalue of $G$ is minimized when $T$ is any set of $n+1$ consecutive integers. We conjecture that the result still holds when the restriction on $y$ is lifted.

\begin{proof}

It suffices to shows that $\lambda_{\min}(G)$ is \emph{strictly} increasing in the phase differences in the limit $y \to 0$, since the claim will also be also true for sufficiently small $y$ by continuity of $\lambda_{\min}(G)$ as a function of $y$.

Without loss, consider that $\theta \in [0, 2 \pi y]$ instead of $[ - \pi y, \pi y ]$. This transformation does not change the eigenvalues of $G$. Expand the complex exponential in Taylor series to get
\begin{align*}
v^* G v &= \sum_{j_1, j_2 = 0}^{n} \overline{v_{j_1}} v_{j_2} \; \frac{1}{2 \pi y} \int_{0}^{2 \pi y} e^{i \theta (\tau_{j_1} - \tau_{j_2})} d\theta \\ 
&= \sum_{m_1, m_2 \geq 0} \overline{q_{m_1}} q_{m_2} \frac{i^{m_1}}{m_1!} \frac{(-i)^{m_2}}{m_2!} \frac{1}{2 \pi y} \int_0^{2 \pi y} \theta^{m_1 + m_2} d \theta,
\end{align*}
where $q_{m}$ is the $m$-th moment of $v$ with respect to the $\tau_j$, i.e.,
\[
q_{m} = \sum_{j = 0}^{n} v_{j} \tau_{j}^m.
\]
One way to invert this relationship for $v_j$ is to write
\[
q = \bp q_{0:n} \\ q_{n+1:\infty} \ep = \bp M \\ N \ep v,
\]
with the square Vandermonde matrix
\[
M = \bp 1 & \cdots & 1 \\ \theta_0 & \cdots & \tau_{n} \\ \vdots & & \vdots \\ \theta^{n}_{0} & \cdots & \tau_{n}^{n} \ep,
\]
and then let $v = M^{-1} q_{0:n}$. The integral factor in the expression of $v^* G v$ is $\int_0^{2 \pi y} \theta^{m_1 + m_2} d\theta = (2 \pi y)^{m_1 + m_2 +1} H_{m_1, m_2}$, with $H_{m_1, m_2} = \frac{1}{m_1 + m_2 + 1}$ the Hilbert matrix. After further letting $D_y = \mbox{diag} \left( \frac{(2 \pi i)^{m}}{m!} y^{m} \right)$, we may express the Rayleigh quotient for $G$ in terms of $q$ as
\[
J = \frac{v^* G v}{v^* v} = y \; \frac{q^* D^*_y H D_y q}{q_{0:n}^* M^{-*} M^{-1} q_{0:n}}.
\]
Notice that the dependence on $y$ is only present in the diagonal factor $D_y$ (and the leading scalar factor $y$.) 
 
Since at least one of the first $n+1$ components of $q$ is nonzero, and $H$ is positive definite when restricted to $0 \leq m_1, m_2 \leq n$, the minimum of $J$  must be of exact order $y^{2n+1}$ as $y \to 0$. This is for instance the case when $q_{0:n} = (0, \ldots, 0, 1)^T$ and $q_{n+1:\infty} = 0$. More generally, values on the order of $y^{2n+1}$ can only be obtained when the weight of $q_{0:n}$ is predominantly on the last component.


In the limit $y \to 0$, we now observe that the contribution of $q_{n+1:\infty}$ is negligible in the numerator of $J$, since
\[
y \; (0 \;\; q^*_{n+1:\infty}) \, D^*_y H D_y  \bp 0 \\ q_{n+1:\infty} \ep = O(y^{2n+3}) \ll y^{2n+1}.
\] 
Denote by $H_n$ and $D_{y,n}$ the respective $0 \leq m_1, m_2 \leq n$ sections of $H$ and $D_y$. With $p$ a shorthand for $D_{y,n} \; q_{0:n}$, the problem has been reduced to proving that the (nonzero) limit as $y \to 0$ of 
\[
\min_{p \ne 0} \; y^{-2n} \, \frac{p^* H_k p}{p^* D_{y,n}^{-*} M^{-*} M^{-1}D^{-1}_{y,n} p}
\]
is strictly decreasing in the phase differences. We are in presence of the minimum generalized eigenvalue of the pencil $H_n - \mu B_y$, where
\[
B_y = y^{2n} D_{y,n}^{-*} M^{-*} M^{-1} D^{-1}_{y,n}.
\]
As $y > 0$, $B_y$ is invertible, hence all the generalized eigenvalues are positive. As $y \to 0$ however, $B_y$ degenerates to the rank-1 matrix
\[
B_0 = c_n \; mm^*,
\]
with $c_n = \frac{(2 \pi)^{2n}}{(n!)^2}$ and where $m^*$ is the $n$th (i.e., last) row of $M^{-1}$. In that case, all but one of the generalized eigenvalues become $+ \infty$. (Indeed, we can change basis to transform the pencil $H_n - \mu c_n mm^*$ into some $\tilde{H}_n - \mu e_1 e_1^T$, whose characteristic polynomial has degree 1.) It is convenient to call this generalized eigenvalue $\mu$; it depends in a continuous and differentiable manner on $y$ as $y \to 0$. In the limit $y \to 0$, the gradient of $\mu$ in the components of $m$ can be obtained by standard perturbation analysis as
\[
\nabla_m \mu = \frac{2 \mu}{m^* m} m^*.
\]
Interestingly, it does not depend on $H_n$, and only depends on $y$ through $\mu$.

The inverse of the Vandermonde matrix $M$ can be computed with Vieta's formulas, which yield a closed-form expression for $m$:
\[
m_j = (-1)^j \prod_{i \ne j} \frac{1}{\tau_i - \tau_j}, \qquad 0 \leq j \leq n.
\] 
(The other elements of $M^{-1}$ appear to be significantly more complicated.) Each component $|m_j|$ in absolute value is manifestly strictly decreasing in the phase differences. We wish to reach the same conclusion for the eigenvalue $\mu$. 

For any two vectors $m$ and $m'$ corresponding to different sets of phases $\tau_j$ and $\tau'_j$ such that
\[
\tau_j' - \tau_i' \geq \tau_j - \tau_i, \qquad  j > i,
\]
at least one of the inequalities being strict, it is clear that $|m_j'| \leq |m_j|$, with at least two of the inequalities being strict. It is also clear that $m$ and $m'$ can be connected in a continuous way so as to respect this monotonicity property, namely there exists a sequence $m(t)$ indexed by some parameter $t \in [0,1]$ such that 
\begin{itemize}
\item $m(0) = m$ and $m(1) = m'$;
\item $m(t)$ is piecewise differentiable with bounded derivative $\dot{m}(t)$;
\item the sign of $\dot{m}_j(t)$ matches that of $-m_j(t)$ componentwise; and 
\item $\dot{m}_j(t) \ne 0$ in at least two components at a time.
\end{itemize}
The corresponding values of $\mu = \mu(m(0))$ and $\mu' = \mu(m(1))$ obey
\begin{align*}
\mu' - \mu &= \int_0^1 \frac{d}{dt} \mu(m(t)) dt \\ 
&= \int_0^1 \dot{m}(t)^T \nabla_m \mu(m(t)) dt \\
&= 2 \int_0^1 \mu(m(t)) \frac{\dot{m}(t)^T m(t)}{m(t)^T m(t)} dt.
\end{align*}
By construction $\dot{m}(t)^T m(t) < 0$ for all $t$, hence we reach the desired conclusion that $\mu' < \mu$.

\end{proof}

%
%

\bibliography{bib2}
\bibliographystyle{plain}

\end{document}